\crefname{fact}{Fact}{Facts}
\Crefname{fact}{Fact}{Facts}
\newif\iffinal
\newcounter{mycounter}
\newtheorem{definition}{Definition}
\newtheorem{theorem}[mycounter]{Theorem}
\newtheorem{corollary}[mycounter]{Corollary}
\newtheorem{lemma}[mycounter]{Lemma}
\newtheorem{example}[mycounter]{Example}
\newtheorem{remark}[mycounter]{Remark}
\newif\ifemi\emifalse
\newif\ifcr\crfalse
\newif\ifcr\crtrue
\newcommand{\tnxbehapi}[1][partly]{ Research {#1} supported by the EU
  H2020 RISE programme under the Marie Skłodowska-Curie grant
  agreement No 778233
}
\newcommand{\tnxitmatters}[1][partially]{
  Work #1
  funded by MIUR project PRIN 2017FTXR7S \emph{IT MATTERS} (Methods
  and Tools for Trustworthy Smart Systems)
}
\newcommand{\ifempty}[3]{%
  \ifthenelse{\isempty{#1}}{#2}{#3}%
}
\newcommand{\mkfun}[4][\colorFun]{
  \newcommand{#2}[1][#4]{
    {#1\textsf{#3}}
    \ifempty{##1}{}{
      \ {##1}}
  }
}
\newcommand{\hidden}[1]{}
\newcommand{\cf}[2]{
  \fontsize{#1}{#1}{\selectfont{#2}}
}
\newcommand{\emi}[2]{
  \marginpar{\fcolorbox{red}{shadecolor}{\cf{#1}{{#2}}}}
}
\newcommand{\emic}[2]{\par
  \fcolorbox{red}{shadecolor}{\parbox{\linewidth}{ 
      \color{gray}
      \begin{description}
      \item[{\color{blue} #2}]{\sf #1}
      \end{description}}}
}
\newcommand{\emi}[2]{}
\newcommand{\emic}[2]{}{}
\newcommand{\sst}{\;\big|\;}
\newcommand{\qst}{\;\colon\;} %such that
\newcommand{\conf}[1]{\ensuremath{\langle {#1} \rangle}}
\renewcommand{\vec}[1]{\overline{#1}}
\newcommand{\bnfdef}{\ ::=\ }
\newcommand{\bnfmid}{\;\ \big|\ \;}
\newcommand{\qqand}[1][and]{\qquad\text{#1}\qquad}
\newcommand{\qand}[1][and]{\quad\text{#1}\quad}
\newcommand{\squo}[1]{\lq {#1}\rq}
\newcommand{\quo}[1]{\lq\lq {#1}\rq\rq}
\def\finex{{\unskip\nobreak\hfil
\penalty50\hskip1em\null\nobreak\hfil$\diamond$
\parfillskip=0pt\finalhyphendemerits=0\endgraf}}
\definecolor{shadecolor}{rgb}{1,0.99,0.9}
\definecolor{bg}{rgb}{0.95,0.95,0.95}
\newcommand{\conflict}{\#}
\newcommand{\Set}[1]{\{ #1 \}}
\newcommand{\Then}{\Longrightarrow}
\newcommand{\Sem}[1]{[\![  #1 ]\!]}
\mkfun{\aSort}{sort}{\al}
\newcommand{\Tensor}{\otimes}
\newcommand{\Or}{\;\vee\;}
\newcommand{\msgSet}{{\cal M}}
\newcommand{\PartSet}{{\cal P}}
\newcommand{\ChanSet}{{\cal C}}
\newcommand{\inj}{\iota}
\newcommand{\Domain}[1]{{\cal D}( #1)}
\newcommand{\Proj}[2]{#1  \upharpoonright  #2}
\newcommand{\MaxC}[1]{M( #1 )}
\renewcommand{\MaxC}[1]{{\cal C}_{\max}(#1)}
\newcommand{\co}[1]{\textit{co}({#1})}
\newcommandx{\atype}[2][1=\minP,2=\maxP,usedefault=@]{\conf{#1,#2}}
\newcommand{\tj}[4]{#1 \vdash #2 : \atype[{#3}][{#4}]}
\newcommand{\minP}{\phi}
\newcommand{\maxP}{\Lambda}
\newcommand{\CxtPi}{\rm \Pi}
\newcommand{\compch}[3]{#1\bowtie_{#3}#2}
\newcommand{\arule}[3]{\frac{\textstyle\rule[-.8ex]{0cm}{3ex}#1}%
{\textstyle\rule[.2ex]{0cm}{3ex}#2}{\mbox{\scriptsize {\sc #3}}}}
\title{Towards Refinable Choreographies\ifcr\thanks{
	 \tnxbehapi.
	 \tnxitmatters.
	 Research partially supported by the UBACyT projects
	 20020170100544BA and 20020170100086BA, and by the PIP project
	 11220130100148CO.
	 COST Action: EUTypes CA15123; and local funds of the University of
	 Turin: Ricerca locale Linea A (BERS-RILO-17-03) - Fondazioni
	 logiche della computazione, Ricerca locale Linea A
	 (PAOL-RILO-18-01) - Fondazioni logiche della computazione, Ricerca
	 locale 2019 Linea A (DE-U-RILO-19-01) - Logica della computazione.
	 \newline The authors thank the anonymous reviewers for their
	 comments and the interesting discussions on the forum of ICE20.
  }
  \fi
}
\author{
  Ugo de'Liguoro \institute{Universit\`a di Torino, Italy} \email{}
  \and
  Hern\'an Melgratti \institute{
	 \begin{tabular}[c]{c}
	  ICC \\  Universidad de Buenos Aires \\ Conicet, Argentina
	 \end{tabular}
}
  \and
  Emilio Tuosto \institute{
	 \begin{tabular}[c]{c}
School of Informatics, Leicester, UK \\ GSSI, L'Aquila, Italy
	 \end{tabular}
}
}
\newdimen\proofrulebreadth \proofrulebreadth=.05em
\newdimen\proofdotseparation \proofdotseparation=1.25ex
\newdimen\proofrulebaseline \proofrulebaseline=2ex
\let\then\relax
\def\hfi{\hskip0pt plus.0001fil}
\mathchardef\squigto="3A3B
\newif\ifinsideprooftree\insideprooftreefalse
\newif\ifonleftofproofrule\onleftofproofrulefalse
\newif\ifproofdots\proofdotsfalse
\newif\ifdoubleproof\doubleprooffalse
\let\wereinproofbit\relax
\newdimen\shortenproofleft
\newdimen\shortenproofright
\newdimen\proofbelowshift
\newbox\proofabove
\newbox\proofbelow
\newbox\proofrulename
\def\shiftproofbelow{\let\next\relax\afterassignment\setshiftproofbelow\dimen0 }
\def\shiftproofbelowneg{\def\next{\multiply\dimen0 by-1 }%
\afterassignment\setshiftproofbelow\dimen0 }
\def\setshiftproofbelow{\next\proofbelowshift=\dimen0 }
\def\setproofrulebreadth{\proofrulebreadth}
\def\prooftree{% NESTED ZERO (\ifonleftofproofrule)
%
% first find out whether we're at the left-hand end of a proof rule
\ifnum  \lastpenalty=1
\then   \unpenalty
\else   \onleftofproofrulefalse
\fi
%
% some space on left (except if we're on left, and no infinity for outermost)
\ifonleftofproofrule
\else   \ifinsideprooftree
        \then   \hskip.5em plus1fil
        \fi
\fi
%
% begin our proof tree environment
\bgroup% NESTED ONE (\proofbelow, \proofrulename, \proofabove,
%               \shortenproofleft, \shortenproofright, \proofrulebreadth)
\setbox\proofbelow=\hbox{}\setbox\proofrulename=\hbox{}%
\let\justifies\proofover\let\leadsto\proofoverdots\let\Justifies\proofoverdbl
\let\using\proofusing\let\[\prooftree
\ifinsideprooftree\let\]\endprooftree\fi
\proofdotsfalse\doubleprooffalse
\let\thickness\setproofrulebreadth
\let\shiftright\shiftproofbelow \let\shift\shiftproofbelow
\let\shiftleft\shiftproofbelowneg
\let\ifwasinsideprooftree\ifinsideprooftree
\insideprooftreetrue
%
% now begin to set the top of the rule (definitions local to it)
\setbox\proofabove=\hbox\bgroup$\displaystyle % NESTED TWO
\let\wereinproofbit\prooftree
%
% these local variables will be copied out:
\shortenproofleft=0pt \shortenproofright=0pt \proofbelowshift=0pt
%
% flags to enable inner proof tree to detect if on left:
\onleftofproofruletrue\penalty1
}
\def\eproofbit{% NESTED TWO
%
% various hacks applicable to hypothesis list 
\ifx    \wereinproofbit\prooftree
\then   \ifcase \lastpenalty
        \then   \shortenproofright=0pt  % 0: some other object, no indentation
        \or     \unpenalty\hfil         % 1: empty hypotheses, just glue
        \or     \unpenalty\unskip       % 2: just had a tree, remove glue
        \else   \shortenproofright=0pt  % eh?
        \fi
\fi
%
% pass out crucial values from scope
\global\dimen0=\shortenproofleft
\global\dimen1=\shortenproofright
\global\dimen2=\proofrulebreadth
\global\dimen3=\proofbelowshift
\global\dimen4=\proofdotseparation
\global\count255=\proofdotnumber
%
% end the box
$\egroup  % NESTED ONE
%
% restore the values
\shortenproofleft=\dimen0
\shortenproofright=\dimen1
\proofrulebreadth=\dimen2
\proofbelowshift=\dimen3
\proofdotseparation=\dimen4
\proofdotnumber=\count255
}
\def\proofover{% NESTED TWO
\eproofbit % NESTED ONE
\setbox\proofbelow=\hbox\bgroup % NESTED TWO
\let\wereinproofbit\proofover
$\displaystyle
}%
\def\proofoverdbl{% NESTED TWO
\eproofbit % NESTED ONE
\doubleprooftrue
\setbox\proofbelow=\hbox\bgroup % NESTED TWO
\let\wereinproofbit\proofoverdbl
$\displaystyle
}%
\def\proofoverdots{% NESTED TWO
\eproofbit % NESTED ONE
\proofdotstrue
\setbox\proofbelow=\hbox\bgroup % NESTED TWO
\let\wereinproofbit\proofoverdots
$\displaystyle
}%
\def\proofusing{% NESTED TWO
\eproofbit % NESTED ONE
\setbox\proofrulename=\hbox\bgroup % NESTED TWO
\let\wereinproofbit\proofusing
\kern0.3em$
}
\def\endprooftree{% NESTED TWO
\eproofbit % NESTED ONE
% \dimen0 =     length of proof rule
% \dimen1 =     indentation of conclusion wrt rule
% \dimen2 =     new \shortenproofleft, ie indentation of conclusion
% \dimen3 =     new \shortenproofright, ie
%                space on right of conclusion to end of tree
% \dimen4 =     space on right of conclusion below rule
  \dimen5 =0pt% spread of hypotheses
% \dimen6, \dimen7 = height & depth of rule
%
% length of rule needed by proof above
\dimen0=\wd\proofabove \advance\dimen0-\shortenproofleft
\advance\dimen0-\shortenproofright
%
% amount of spare space below
\dimen1=.5\dimen0 \advance\dimen1-.5\wd\proofbelow
\dimen4=\dimen1
\advance\dimen1\proofbelowshift \advance\dimen4-\proofbelowshift
%
% conclusion sticks out to left of immediate hypotheses
\ifdim  \dimen1<0pt
\then   \advance\shortenproofleft\dimen1
        \advance\dimen0-\dimen1
        \dimen1=0pt
%       now it sticks out to left of tree!
        \ifdim  \shortenproofleft<0pt
        \then   \setbox\proofabove=\hbox{%
                        \kern-\shortenproofleft\unhbox\proofabove}%
                \shortenproofleft=0pt
        \fi
\fi
%
% and to the right
\ifdim  \dimen4<0pt
\then   \advance\shortenproofright\dimen4
        \advance\dimen0-\dimen4
        \dimen4=0pt
\fi
%
% make sure enough space for label
\ifdim  \shortenproofright<\wd\proofrulename
\then   \shortenproofright=\wd\proofrulename
\fi
%
% calculate new indentations
\dimen2=\shortenproofleft \advance\dimen2 by\dimen1
\dimen3=\shortenproofright\advance\dimen3 by\dimen4
%
% make the rule or dots, with name attached
\ifproofdots
\then
        \dimen6=\shortenproofleft \advance\dimen6 .5\dimen0
        \setbox1=\vbox to\proofdotseparation{\vss\hbox{$\cdot$}\vss}%
        \setbox0=\hbox{%
                \advance\dimen6-.5\wd1
                \kern\dimen6
                $\vcenter to\proofdotnumber\proofdotseparation
                        {\leaders\box1\vfill}$%
                \unhbox\proofrulename}%
\else   \dimen6=\fontdimen22\the\textfont2 % height of maths axis
        \dimen7=\dimen6
        \advance\dimen6by.5\proofrulebreadth
        \advance\dimen7by-.5\proofrulebreadth
        \setbox0=\hbox{%
                \kern\shortenproofleft
                \ifdoubleproof
                \then   \hbox to\dimen0{%
                        $\mathsurround0pt\mathord=\mkern-6mu%
                        \cleaders\hbox{$\mkern-2mu=\mkern-2mu$}\hfill
                        \mkern-6mu\mathord=$}%
                \else   \vrule height\dimen6 depth-\dimen7 width\dimen0
                \fi
                \unhbox\proofrulename}%
        \ht0=\dimen6 \dp0=-\dimen7
\fi
%
% set up to centre outermost tree only
\let\doll\relax
\ifwasinsideprooftree
\then   \let\VBOX\vbox
\else   \ifmmode\else$\let\doll=$\fi
        \let\VBOX\vcenter
\fi
% this \vbox or \vcenter is the actual output:
\VBOX   {\baselineskip\proofrulebaseline \lineskip.2ex
        \expandafter\lineskiplimit\ifproofdots0ex\else-0.6ex\fi
        \hbox   spread\dimen5   {\hfi\unhbox\proofabove\hfi}%
        \hbox{\box0}%
        \hbox   {\kern\dimen2 \box\proofbelow}}\doll%
%
% pass new indentations out of scope
\global\dimen2=\dimen2
\global\dimen3=\dimen3
\egroup % NESTED ZERO
\ifonleftofproofrule
\then   \shortenproofleft=\dimen2
\fi
\shortenproofright=\dimen3
%
% some space on right and flag we've just made a tree
\onleftofproofrulefalse
\ifinsideprooftree
\then   \hskip.5em plus 1fil \penalty2
\fi
}
\begin{document}

\ifemi\setcounter{tocdepth}{2}\listoffixmes\fi

\maketitle              % typeset the title of the contribution
\begin{abstract}
  We investigate refinement in the context of choreographies.
  We introduce refinable global choreographies allowing for the
  underspecification of protocols, whose interactions can be refined
  into actual protocols.
  Arbitrary refinements may spoil \emph{well-formedness}, that is the
  sufficient conditions that guarantee a protocol to be
  implementable.
  We introduce a typing discipline that enforces well-formedness of
  typed choreographies.
  Then we unveil the relation among refinable choregraphies and their
  admissible refinements in terms of an axiom scheme.
\end{abstract}

% Documents sections followx

\section{Introduction}\label{sec:intro}
% !TEX root =  main.tex
%
The advent of structured programming~\cite{dij76} is probably behind
the widespread use of refinement methods in computer science.
Refinement is paramount in many formal methods, in software
engineering, and in verification, because the possibility of
structuring a system into simpler components is crucial to tackle the
complexity of a system.

In this paper we investigate the refinement of choreographies of
message-passing systems.
In this domain, a choreography specifies the coordination of
distributed components (aka \emph{participants} or \emph{roles}) by
disciplining the exchange of messages.
Following W3C~\cite{w3c:cho}, we envisage a choreography as a contract
consisting of a \emph{global view} that can be used as a blueprint for
defining each participant.
A global view is basically an \emph{application-level} protocol
realised through the coordination of the resulting \emph{local views},
the specifications of participants.
This description is the ground for the so-called top-down engineering
represented by the following diagram:
\begin{equation}
  \label{eq:chor:dia}
  \begin{tikzpicture}[node distance=1cm and 1cm, every text node part/.style={align=center}]
    \node at (0,0) (g) [text width = 1cm] {Global view};
    \node at (5,0) (l) {Local\\view};
    \node at (10,0) (c) {Local\\systems};
    \draw[->] (g) -- (l) node [midway,above]{project};
    \draw[->] (c) -- (l) node [midway,above]{comply};
  \end{tikzpicture}
\end{equation}
where the \squo{projection} operation produces local views from the
global ones and the operation \squo{comply} verifies that the
behaviour of each participant adheres to the one of the corresponding
local view.

Choreographic approaches are appealing because, unlike orchestration,
they do not require an explicit coordinator (see~\cite{bdft16} for a
deeper discussion).
Moreover, global views allow developers to work independently on
different components.

Despite the main advantages discussed above, choreographic approaches
suffer a main drawback: the lack of support for modular development.
This shortcoming is present in standards such as BPMN or in workflow
patterns and languages~\cite{boe12} and it has been more recently
flagged also for choreographic programming~\cite{cmv18}.

We propose a choreographic model of message-passing applications based
on point-to-point communication equipped with a simple refinement
mechanism.
Let us illustrate this through some simple examples.
This gives us the opportunity to informally use \emph{global
  choreographies}~\cite{gt18,gt16} (g-choreographies, for short), the
formalisation of global views adopted here for the technical
development of the paper.

Consider the g-choreography
\begin{align}\label{ex:tie}
  \gcho[][{\refgint[c][md][s]}][{\gseq[][{\refgint[c][req][s]}][{\refgint[s][done][c]}]}]
\end{align}
where a client \p[c] either sends some meta-data $\msg[md]$ or a
request $\msg[req]$ to a server \p[s].
In the former case the protocol terminates, while in the latter the
server is supposed to send back a response $\msg[done]$ to \p[c].
The dashed arrows above represent \emph{refinable interactions},
that is interactions that can be replaced so to refine the
application-level
protocol.
For instance, to allow \p[s] to send \p[c] some statistical
information in the second branch of \eqref{ex:tie} we can refine
$\refgint[s][done][c]$ with
$\gseq[][{\gint[][s][stats][c]}][{\gint[][s][done][c]}]$ and obtain
\begin{align}\label{ex:tieref}
  \gcho[][{\refgint[c][md][s]}][{\gseq[][{\refgint[c][req][s]}][{\gseq[][{\gint[][s][stats][c]}][{\gint[][s][done][c]}]}]}]
\end{align}
where the interactions with the solid arrow are now \quo{ground},
namely they cannot be further refined.

This is our simple refinement mechanism: replace a refinable
interaction with a more complex (refinable) protocol.
A key goal here is to provide a mechanism of refinement without
spoiling \emph{well-formedness} conditions.
Basically, well-formedness conditions ensure that the application-level
protocol modelled by the global view is faithfully executed by the
participants that comply with the projected local views.
Let us again explain this with an example.
Suppose we refine \eqref{ex:tie} by replacing each refinable
interaction with its ground version but for $\refgint[c][md][s]$,
which is replaced by
$\gseq[][{\gint[][c][md][b]}][{\gint[][b][md][s]}]$, where \q\ is a
brokerage service mediating the exchange of $\msg[md]$.
We obtain
\begin{align}\label{eq:err}
  \gcho[][{\gseq[][{\gint[][c][md][b]}][{\gint[][b][md][s]}]}][{\gseq[][{\gint[][c][req][s]}][{\gint[][s][done][c]}]}]
\end{align}
The g-choreography above is not well-formed because the broker \p[b]
is oblivious of the second branch.
Namely, \p[b] will be stuck waiting for message $\msg[md]$ should
\p[C]\ opt for the second branch of the choice.

\paragraph{Contributions \& Structure}
We introduce a simple mechanism for refining global views of choreographies.
Firstly, we equip an existing formal language expressing \emph{global
  choreographies} (g-choreographies, for short) with a semantics based
on event structures (surveyed in \cref{sec:bkg}) and identifying a
typing discipline (\cref{sec:choref}) that checks sufficient
conditions for well-formedness.
Secondly, we extend g-choreographies with our refinement mechanism
(\cref{sec:refinement}).
A key design choice of our framework is to ground refinements on the concept
\emph{refinable interactions}.
Inspired by the action refinement mechanism of process algebra, we consider
refinable g-choreographies those where refinable interactions may occur.
Refinable g-choreographies play the role of incomplete specifications
where, by repeated replacements of refinable interactions, one can
incrementally attain a fully specified global view.

One problem that may arise in this process is that refinements could
spoil well-formedness and hence compromise realisability of global
views.
To avoid this we extend the typing discipline for non-refinable
g-choreographies to refinable ones and show that the replacement of a
refinable interaction with a g-choreography typable with the same type
ensures realisability.

We discuss related work and draw some conclusions in \cref{sec:conc}.

%%% Local Variables:
%%% mode: latex
%%% TeX-master: "main"
%%% End:

\section{Background}\label{sec:bkg}
% !TEX root =  main.tex

We recall basic notions of \emph{event structures} used in
\cref{sec:choref} to give semantics to refinable choreographies.
Event structures model concurrency in terms of partial orders of
labelled events.
We focus our attention on event structures of communication events.
Let $\ptpset$ be a set of \emph{participants} (ranged over by $\ptp$,
$\ptp[B]$, etc.) and $\msgset$ be a set of \emph{messages} (ranged
over by $\msg$, $\msg[x]$, etc.).
We take $\ptpset$ and $\msgset$ disjoint.
Let
\[
  \ChanSet = (\PartSet \times \PartSet) \setminus \Set{(\p,\p) \sst \p \in \PartSet}
  \qquad
  \lset^! = \ChanSet \times \Set{!} \times \msgSet
  \qquad
  \lset^? =
  \ChanSet \times \Set{?} \times \msgSet
\]
be the sets of \emph{channels}, output labels, and input labels
respectively.
We write $\aout \in \lset^!$ and $\ain \in \lset^?$ instead of
$((\p,\q), !, \msg) \in \lset^!$ and $((\p,\q), ?, \msg) \in \lset^?$.
The \emph{subject} of a label $\al$, written $\esbj[\al]$, is defined
as $\esbj[\aout] = \esbj[{\ain[B][A]} ]= \{\p\}$.
The elements of $\lset = \lset^! \cup \lset^?$ (ranged over by $\al$)
will be used to label the events of our event structures.
The \emph{co-action} of $\al \in \lset$ is defined as
$\co{\aout} = \ain$ and $\co{\ain} = \aout$ and extends element-wise
on sets of actions.

\begin{definition}[Event structures]
  An \emph{event structure} labelled over $\lset$ (shortly {\em event
	 structure}) is a tuple $\eset = (\aE, \leq, \conflict, \lambda)$
  where
  \begin{itemize}
  \item $\aE$ a set of \emph{events}
  \item $\leq \;\subseteq \; \aE \times \aE$ a partial order, the
	 \emph{causality} relation
  \item $\conflict \;\subseteq \;\aE \times \aE$ a symmetric and
	 irreflexive relation, the \emph{conflict} relation,
  \item $\lambda: \aE \to \lset$ a \emph{labelling} mapping.
  \end{itemize}
  are such that
  \begin{itemize}
  \item each event has only finitely many predecessors, namely
	 $\forall \ae \in \aE.\; \Set{\ae' \in \aE \mid \ae' \leq \ae}$ is
	 finite, and
  \item conflicts are hereditary, namely
	 $\forall \ae, \ae', \ae'' \in \aE.\; \ae \conflict \ae' \And \ae'
	 \leq \ae'' \Then \ae \conflict \ae''$
  \end{itemize}
  If $\eset = (\aE, \leq, \conflict, \lambda)$ is an event structure
  then $\min (\eset), \max (\eset) \subseteq \aE $ are the minimal and
  the maximal elements in the poset $(\aE, \leq)$.
  We define $\emptyev = (\emptyset, \emptyset, \emptyset, \emptyset)$
  as the empty event structure, where $\lambda_\emptyset = \emptyset$
  is the empty mapping.
\end{definition}
Notice that if $\eset \neq \emptyev$ then minimal elements do exist,
while this is not necessarily the case for maximal ones.
We depict event structures following the customary representation of
the literature~\cite{win81} as the diagram \eqref{eq:es} below;
instead of events though, we prefer to use their labels, for instance:
\begin{align}\label{eq:es}
  \evstr[1]{
	 \node (l) {$\al_1$};
	 \node[right = of l] (r) {$\al_3$};
	 \node[below = of l] (l?) {$\al_2$};
	 \node[below = of r] (r?) {$\al_4$};
	 \node[below = of r?] (s?) {$\al_5$};
	 \node[right = of r?] (s1) {$\al_7$};
	 \node[below = of s1] (s1?) {$\al_6$};
	 \evconflict l r;
	 \evleq l {l?};
	 \evleq r {r?};
	 \evleq {r?} {s?};
	 \evleq {s1} {s1?};
	 \evleq {s?} {s1?};
  }
\end{align}
represents an event structure with events $\ae_1, \ldots, \ae_7$ (not
represented in the diagram above) and labelled respectively by
$\al_1,\ldots,\al_7$ where
\begin{itemize}
\item the event $\ae_1$ precedes $\ae_2$ (i.e., $\ae_1 \leq \ae_2$ in
  the partial order of the event structure)
\item events $\ae_1$ and $\ae_3$ are in conflict; recall that the
  conflict relation is hereditary, hence $\ae_1$ and $\ae_2$ are in
  conflict with all other events but $\ae_7$
\item events $\ae_2$ and $\ae_6$ are maximal; the latter follows
  both $\ae_5$ and $\ae_7$
\item events $\ae_5$ and $\ae_7$ are independent of each other
(actually, $\ae_7$ is independent of any event but $\ae_6$).
\end{itemize}
In our diagrams we adopt the implicit assumption that each occurrence
of a label correspond to a different event; for instance, in the
diagram \eqref{eq:es}, even if two labels, say $\al_1$ and $\al_7$
were equal, the corresponding events would be distinct (i.e.,
$\ae_1 \neq \ae_7$).

An event structure induces a natural order and conflict relations
on the events performed by each participant.
More precisely, the \emph{projection} of an event structure
$\eset = (\aE, \leq, \conflict, \lambda)$ on a participant
$\ptp \in \PartSet$ is the structure
\[\Proj{\eset}{\ptp} = (\Proj{\aE}{\ptp}, \Proj{\leq}{\ptp}, \Proj{\conflict}{\ptp}, \Proj{\lambda}{\ptp})\]
where
\begin{description}
\item $\Proj{\aE}{\ptp} = \Set{\ae \in \aE \mid \esbj[\lambda(\ae)] = \ptp}$
\item $\Proj{\leq}{\ptp} = \leq \cap\, (\Proj{\aE}{\ptp})^2$ and $\Proj{\conflict}{\ptp} = \conflict \cap\, (\Proj{\aE}{\ptp})^2$
\item $\Proj{\lambda}{\ptp} = \lambda \big|_{\Proj{\aE}{\ptp}}$, namely the restriction of $\lambda$ to $\Proj{\aE}{\ptp}$
\end{description}
Trivially, the induced relations form an event structure.
\begin{lemma}\label{lem:projection}
  If $\eset$ is an event structure and $\ptp \in \PartSet$ then
  $\Proj{\eset}{\ptp}$ is an event structure.
\end{lemma}
\begin{proof}
  Immediate since $\Proj{\leq}{\ptp} \subseteq \; \leq$ and
  $\Proj{\conflict}{\ptp} \subseteq \conflict$.
\end{proof}

We now define a few operations instrumental to our technical
development.
Let $\eset_0 = (\aE_0, \leq_0, \conflict_0, \lambda_0)$ and
$\eset_1 = (\aE_1, \leq_1, \conflict_1, \lambda_1)$ be labelled event
structures.

The product operation $\_ \Tensor \_$ yields the disjoint union of
event structures preserving their orders, conflicts, and labellings;
it is define as
\[
\eset_0 \Tensor \eset_1 = (\aE_0 \uplus \aE_1, \leq, \conflict,
\lambda)
\] 
where writing $\inj_i: \aE_i \to \aE_0 \uplus \aE_1$ for the injections, we set
\[
  \inj_i \ae \leq \inj_j \ae' \iff i = j \And \ae \leq_i \ae'
  \qquad
  \inj_i \ae \conflict \inj_j \ae' \iff i = j \And \ae \conflict_i \ae'
  \qquad
  \lambda(\inj_i \ae) = \lambda_i(\ae)
\]
The sum $\sum_{i\in I} \eset_i$ yields the disjoint union of a family
$\Set{\eset_i}_{i\in I}$ of event structures
$\eset_i = (\aE_i, \leq_i, \conflict_i, \lambda_i)$ preserving their
orders and labellings while introducing conflicts among events of
different members of the family; it is defined as the event structure
$(\biguplus_{i\in I} \aE_i, \leq, \conflict, \lambda)$ where, writing
$\inj_i: \aE_i \to \biguplus_{i\in I} \aE_i$ for the injections, the
following hold:
\[
  \inj_i \ae \leq \inj_j \ae' \iff i = j \And \ae \leq_i \ae'
  \qquad
  \inj_i \ae \conflict \inj_j \ae' \iff i \neq j \Or (i = j \And \ae
  \conflict_i \ae')
  \qquad
  \lambda(\inj_i \ae) = \lambda_i(\ae)
\]
In particular we write 
\[ \eset_0 + \eset_1 = \sum_{i \in \Set{0,1}}\eset_i \qquad \text{and}
  \qquad \sum_{i\in I} \eset = \sum_{i\in I} \eset_i \quad \text{where }
  \eset_i = \eset \text{ for all } i\in I
\]

\begin{lemma}\label{lem:tensorAndSum}
If $\eset_0, \eset_1$ are event structures and $\Set{\eset_i}_{i\in I}$ is a family of event structures then 
\[
  \eset_0 \Tensor \eset_1 \qqand \sum_{i\in I} \eset_i \qquad \text{ are event structures.}
\]
\end{lemma}

\begin{definition}[Configuration domain]
  If $\eset = (\aE, \leq, \conflict, \lambda)$ is an event structure
  a set of events $x \subseteq \aE$ is a \emph{configuration} if
  \begin{enumerate}
  \item $\ae \leq \ae' \And e' \in x \Then \ae \in x$ ($x$ is downward
	 closed)
  \item $ \forall \ae, \ae' \in x.\; \neg(\ae \conflict \ae')$ ($x$ is
	 consistent)
  \end{enumerate}
  Let $C = \Set{x \subseteq \aE \sst x \text{ a configuration}}$; the
  \emph{domain of configurations} of $\eset$ is the poset
  $\Domain{\eset} = (C, \subseteq)$.
  We say that $x \in C$ is \emph{maximal} if it is such in
  $\Domain{\eset}$: $\MaxC{\eset}$ is the set of maximal
  configurations.
\end{definition}

Being conflict-free and maximal, configurations in $\MaxC{\eset}$ correspond to branches of events of $\eset$.

%%% Local Variables:
%%% mode: latex
%%% TeX-master: "main"
%%% End:

\section{Well-formedness by Typing}\label{sec:choref}
We formalise global views of choreographies as
g-choreographies~\cite{gt16,gt18}.
Although we maintain the original syntax, we provide a new semantics
of g-choreographies based on event structures.
This is instrumental to identify a simple notion of well-formedness
that can be statically checked.
\subsection{Global Choreographies}
% !TEX root =  main.tex

\cref{{def:refgg}} introduces \emph{ global choreographies}.
The syntax of a g-choreography is given by the grammar below
that we borrow from~\cite{gt18}.
\begin{definition}[Global Choreographies]\label{def:refgg}
  The set $\gset$ of \emph{global choreographies
	 (g-choreographies for short)} consists of the terms $\aG$ derived
  by the grammar
  \begin{align}
    \aG \bnfdef & \gempty & \text{empty} \label{eq:chorempty}
    \\
    \bnfmid & \gint & \text{interaction} \label{eq:chorint}
    \\
    \bnfmid & \gseq[] & \text{sequential} \label{eq:chorseq}
    \\
    \bnfmid & \gpar & \text{parallel} \label{eq:chorpar}
    \\
    \bnfmid & \gcho & \text{choice} \label{eq:chorcho}
  \end{align}
  such that $\ptp \neq \ptp[B]$ in interactions \eqref{eq:chorint}
  We let $\PartSet(\aG)$ be the set of participants occurring in $\aG$.
\end{definition}
Besides the empty choreography $\gempty$, the syntax of
\cref{def:refgg} allows us to specify choreographies whose basic
elements are interactions $\gint[]$
which represent that participant \p\ sends message $\msg$
to participant \q, which in turn should receive it.
Finally, g-choreographies can be composed sequentially, in parallel,
and non-deterministically.
The syntax in~\cite{gt18} encompasses iterative g-choreographies
which we drop for simplicity.
Adding iteration can be done following standard techniques at the cost
of a substantial increase of the technical complexity.

\begin{example}\rm
  The term in \eqref{eq:err} in \cref{sec:intro} is a g-choreography.
  \finex
\end{example}
We now give the semantics of g-choreographies in terms of event
structures.
To this purpose, note that not every $\aG$ is \quo{meaningful} because
$\aG$ can specify protocols where the behaviour of some participants, say
\q, depends on choices made by others that are not properly propagated
to \q.
The following example illustrates this.
\begin{example}\label{ex:notwb}\rm
  The g-choreography
  $\aG =
  \gcho[][{\gseq[][{\gint[][@][m][c]}][{\gint[][b][m][c]}]}][{\gseq[][{\gint[][@][n][c]}][{\gint[][b][n][c]}]}]$
  specifies a protocol where \p\ decides whether to send
  $\msg[m]$ or $\msg[n]$ to \p[c].
  In either case \q\ should mimic \p\ and send the same message to
  \p[c].
  However, in a distributed implementation of this protocol \q\ is
  oblivious of the decision of \p; hence, e.g., \q\ could send message
  $\msg[m]$ while \p\ decided to send message $\msg[n]$.
  \finex
\end{example}

To mitigate the problem above, we give \emph{well-formedness}
conditions that rule out meaningless g-choreographies.
We start with \emph{well-branchedness}.
\begin{definition}[Well-branchedness]\label{def:wb}
  Event structures $\eset_0$ and $\eset_1$ are
  \emph{well-branched} (in symbols $\wb[\eset_0][\eset_1]$) if, for
  $\eset = \eset_0 + \eset_1 = (\aE, \leq, \conflict, \lambda)$, the
  following two conditions hold:
  \begin{align*}
    \textit{determined choice:\quad }
	 &
		\forall \q \in \ptpset \qst
		(\Proj{\eset_0}{\q} = \emptyev \iff \Proj{\eset_1}{\q} = \emptyev) \And
		\\ & \qquad
		\forall \ae, \ae' \in \min (\Proj \aE \q) \qst \ae \conflict \ae' \Then \lambda(\ae) \neq \lambda(\ae')
    \\
    \textit{unique selector:\quad } & \exists \p \in \ptpset \qst \emptyset \neq \lambda(\min (\Proj \aE \p)) \subseteq \{\al \in \lset^! \sst \esbj[\al] = \p\} \And
    \\ & \qquad \forall \q \neq \p \in \ptpset \qst \lambda(\min (\Proj \aE \q)) \subseteq \{\al \in \lset^? \sst \esbj[\al] = \q\}
  \end{align*}
  We dub \emph{active} the unique participant $\p$ satisfying
  the second condition and \emph{passive} the others.
\end{definition}

Well-branchedness is akin to the conditions on behavioural types that
enforce choice determinacy.
Namely, each choice is determined by a unique participant, dubbed
selector, which starts to send messages to the others and that any
non-selector participant becomes aware of the choice taken by the
selector just because of the messages received on a branch.

We re-cast the notion of \emph{well-forkedness} in~\cite{gt18} in
terms of event structures.
\begin{definition}[Well-forkedness]\label{def:wf}
  Two event structures $\eset = (\aE, \leq, \conflict, \lambda)$ and
  $\eset' = (\aE', \leq, \conflict', \lambda')$ are \emph{well-forked}
  (in symbols $\wf[\eset][\eset']$) if
  $\lambda(\aE) \cap \lambda'(\aE') = \emptyset$.
\end{definition}
As well-branchedness, the parallel composition of g-choreographies is
subject to some conditions.
As observed in~\cite{gt18}, \quo{confusion} may arise when different
threads of participants exchange the same message: the message meant to
be received by a thread is received by the other.
If this happens there is a violation of the causal order of the
events.
The next example illustrates the problem.
\begin{example}\rm
  The g-choreography
  $\gpar[][{\gseq[][{\gint[]}][{\gint[][b][m][c]}]}][{\gseq[][{\gint[]}][{\gint[][b][n][d]}]}]$
  is not well-forked because the interaction between \q\ and \p[c]
  should start \emph{after} the \quo{left thread} of \p\ had sent message
  $\msg$.
  However, it could happen that the left thread of \q\ receives the
  message sent by the right thread of \p\ so violating the
  specification.
  And likewise for the \quo{right threads}.
  \finex
\end{example}

\begin{definition}[Sequential composition]\label{def:seq}
Let $\eset$, $\eset'$ be event structures and
\[(\aE'', \leq'', \conflict'', \lambda'') = \eset \Tensor \sum_{x \in
  \MaxC{\eset}} \eset_x'\] 
where the structures
$\eset_x' = (\aE_x, \leq_x, \conflict_x, \lambda_x)$ are disjoint
copies of $\eset'$, then
\[
  \rseq[\eset][\eset'] = (\aE'', \leq'' \; \cup \; \bigcup_{x \in \MaxC{\eset}}
  \Set{(\ae,\ae') \in x \times \aE_x \sst
	 \esbj[\lambda'' (\ae)] = \esbj[\lambda'' (\ae')]}, \conflict'', \lambda'').
\]
\end{definition}
The intuition of the definition of $\rseq[\eset][\eset']$ is that any
branch $x \in \MaxC{\eset}$ of $\eset$ is concatenated to a (pairwise
incompatible) copy of $\eset'_x$, where events in $\eset$ cause those
of $\eset'_x$ with labels having the same subject. Admittedly, in the
context of Definition \ref{def:interpretation} this is unnecessarily
abstract, since an event structure $\eset$ interpreting a
g-choreography is finite, and hence $\MaxC{\eset}$ and any of its
elements are such: hence any $x \in \MaxC{\eset}$ includes a finite
subset of maximals with respect to $\leq_{\eset}$.
However the definition applies to infinite structures as well.

\begin{lemma}\label{lem:seq}
If $\eset, \eset'$ are event structures, then $\rseq[\eset][\eset']$ is an event structure.
\end{lemma}

We can now give a denotational semantics of g-choreographies.
We require our semantics to be defined only on g-choreographies
amenable of being realised by distributed components satisfying
the following requirements:
\begin{itemize}
\item no extra components: each component uniquely corresponds to a
  participant of the g-choreography
\item no extra communications: each communication among the components
  uniquely corresponds to some communication events of the (semantics
  of the) g-choreography
\end{itemize}
These requirements impose that the communication behaviour of a
realisation of a g-choreography faithfully reflects the communication
events of the g-choreography.

\begin{definition}[Semantics]\label{def:interpretation}
  Let $\aG$ be a g-choreography.
  The \emph{semantics} $\Sem{\aG}$ of $\aG$ is the partial mapping assigning
  an event structure to $\aG$ according to the following inductive clauses:
  \[\begin{array}{rcl}
		\Sem{ \gempty} & = & \emptyev
		\\ [2mm]
		\Sem{\gint} & = &  (\Set{\ae_1, \ae_2}, \Set{\ae_1 < \ae_2},
								\emptyset, \Set{\ae_1 \mapsto \aout, \; \ae_2
								\mapsto \ain[B][A]})
		\\ [2mm]
		\Sem{\gseq[]} & = & \rseq [\Sem{\aG}] [\Sem{\aG'}]
		\\ [2mm]
		\Sem{\gpar} & = & \left\{ \begin{array}{ll}
											 \Sem{\aG} \Tensor \Sem{\aG'} & \text{if } \wf[\Sem{\aG}][\Sem{\aG'}]
											 \\ [1mm] \bot & \text{otherwise}
										  \end{array} \right.
		\\ [6mm]
		\Sem{\gcho} & = & \left\{ \begin{array}{ll}
											 \Sem{\aG} + \Sem{\aG'} & \text{if } \wb[\Sem{\aG}][\Sem{\aG'}]
											 \\ [1mm]
											 \bot & \text{otherwise}
										  \end{array} \right.
	 \end{array}
  \]
where if either $\Sem{\aG}$ or $\Sem{\aG'}$ is $\bot$, then $\rseq [\Sem{\aG}] [\Sem{\aG'}]$, $\Sem{\aG} \Tensor \Sem{\aG'}$ and $\Sem{\aG} + \Sem{\aG'}$ are all equal to $\bot$.
Finally we say that $\aG$ is {\em well-formed} if $\Sem{\aG} \neq \bot$.
\end{definition}

We say that a g-choreography $\aG$ is \emph{well-formed} when each
choice subterm of $\aG$ is well-branched and each parallel subterm of
$\aG$ is well-forked.

\begin{example}\rm
  Let us spell out the semantics of the g-choreography
  $\aG =
  \gcho[][{\gint[][c][md][s]}][{\gseq[][{\gint[][c][req][s]}][{\gseq[][{\gint[][s][stats][c]}][{\gint[][s][done][c]}]}]}]$
  obtained by the refinement \eqref{ex:tieref} (cf. \cref{sec:intro})
  with the further refinement of $\refgint[c][md][s]$ with its ground
  counterpart $\gint[][c][md][s]$.
  By \cref{def:interpretation}, $\Sem \aG$ is defined if
  $\wb[\Sem{\gint[][c][md][s]}][\Sem{\gseq[][{\gint[][c][req][s]}][{\gseq[][{\gint[][s][stats][c]}][{\gint[][s][done][c]}]}]}]$
  holds.
  We now verify that this is the case.
  By definition, we have:
  \[
	 \Sem{\gint[][c][md][s]} =
	 \evstr{
		\node (l) {$\aout[c][s][][md]$};
		\node[below = of l] (r) {$\ain[c][s][][md]$};
		\evleq l r;
	 }
  \qqand
  \Sem{\gseq[][{\gint[][c][req][s]}][{\gseq[][{\gint[][s][stats][c]}][{\gint[][s][done][c]}]}]} =
	 \evstr{
		\node (l) {$\aout[c][s][][req]$};
		\node[below = of l] (r?) {$\ain[c][s][][req]$};
		\node[below = of r?] (s) {$\aout[s][c][][stats]$};
		\node[below = of s] (s?) {$\ain[s][c][][stats]$};
		\node[right = of s] (s1) {$\aout[s][c][][done]$};
		\node[below = of s1] (s1?) {$\ain[s][c][][done]$};
		\evleq l {r?};
		\evleq {r?} s;
		\evleq s {s?};
		\evleq s {s1};
		\evleq {s1} {s1?};
		\evleq {s?} {s1?};
	 }
  \]
  The sum operation on event structures introduces conflicts between the events in
  $\Sem{\gint[][c][md][s]}$ and those in
  $\Sem{\gseq[][{\gint[][c][req][s]}][{\gseq[][{\gint[][s][stats][c]}][{\gint[][s][done][c]}]}]}$,
  hence:
  \[
	 \eset = \Sem{\gint[][c][md][s]} +   \Sem{\gseq[][{\gint[][c][req][s]}][{\gseq[][{\gint[][s][stats][c]}][{\gint[][s][done][c]}]}]} =
	 \evstr{
		\node (l) {$\aout[c][s][][md]$};
		\node[right = of l] (r) {$\aout[c][s][][req]$};
		\node[below = of l] (l?) {$\ain[c][s][][md]$};
		\node[below = of r] (r?) {$\ain[c][s][][req]$};
		\node[below = of r?] (s) {$\aout[s][c][][stats]$};
		\node[below = of s] (s?) {$\ain[s][c][][stats]$};
		\node[right = of s] (s1) {$\aout[s][c][][done]$};
		\node[below = of s1] (s1?) {$\ain[s][c][][done]$};
		\evconflict l r;
		\evleq l {l?};
		\evleq r {r?};
		\evleq {r?} s;
		\evleq s {s?};
		\evleq s {s1};
		\evleq {s1} {s1?};
		\evleq {s?} {s1?};
	 }
  \]
  (recall that conflicts are hereditary, hence it is enough to put
  only minimal events in conflict).
  Now we look at the projections on
  \p[c] and on \p[s] of $\eset$:
  \[
  	 \Proj{\Sem \aG}{\p[c]} =
	 \evstr{
		\node (l) {$\aout[c][s][][md]$};
		\node[right = of l] (r) {$\aout[c][s][][req]$};
		\node[below = of r] (s?) {$\ain[s][c][][stats]$};
		\node[below = of s?] (s1?) {$\ain[s][c][][done]$};
		\evconflict l r;
		\evleq r {s?};
		\evleq {s?} {s1?};
	 }
    \qqand
	 \Proj{\Sem \aG}{\p[s]} =
	 \evstr{
		\node (l) {$\ain[c][s][][md]$};
		\node[right = of l] (r) {$\ain[c][s][][req]$};
		\node[below = of r] (s?) {$\aout[s][c][][stats]$};
		\node[below = of s?] (s1?) {$\aout[s][c][][done]$};
		\evconflict l r;
		\evleq r {s?};
		\evleq {s?} {s1?};
	 }
  \]
  where in $\Proj{\eset}{\p[c]}$ the minimal events are in conflict
  because they are in conflict in $\eset$ by construction and in
  $\Proj{\Sem \aG}{\p[s]}$ they are in conflict because in $\eset$ the
  events inherit previous conflict.
  It is easy to verify that the conditions of \cref{def:wb} are
  therefore verified.
  \finex
\end{example}

\begin{example}\label{ex:undef}\rm
  Recall the g-choreography \eqref{eq:err} in \cref{sec:intro} which
  we rewrite as $\aG_\mathit{err} = \gcho[]$ where
  \[
        \aG =
        \gseq[][{\gint[][c][md][b]}][{\gint[][b][md][s]}]
        \qqand
        \aG' = \gseq[][{\gint[][c][req][s]}][{\gint[][s][done][c]}]
  \]
  Let us show that $\Sem{\aG_\mathit{err}} = \bot$.
  In fact,
  \[
  \Sem{\aG} =
  \evstr{
    \node (l) {$\aout[c][b][][md]$};
    \node[below = of l] (r) {$\ain[c][b][][md]$};
    \node[below = of r] (s?) {$\aout[b][s][][md]$};
    \node[below = of s?] (s1?) {$\ain[b][s][][md]$};
    \evleq l r;
    \evleq r {s?};
    \evleq {s?} {s1?};
  }
  \qqand
  \Sem{\aG'} =
  \evstr{
    \node (l) {$\aout[c][s][][req]$};
    \node[below = of l] (r) {$\ain[c][s][][req]$};
    \node[below = of r] (s?) {$\aout[s][c][][done]$};
    \node[below = of s?] (s1?) {$\ain[s][c][][done]$};
    \evleq l r;
    \evleq r {s?};
    \evleq {s?} {s1?};
  }
  \qqand[hence]
  \Sem{\aG} +  \Sem{\aG'} =
  \evstr{
    \node (l) {$\aout[c][b][][md]$};
    \node[below = of l] (l1) {$\ain[c][b][][md]$};
    \node[below = of l1] (l2) {$\aout[b][s][][md]$};
    \node[below = of l2] (l3) {$\ain[b][s][][md]$};
    \node[right = of l] (r) {$\aout[c][s][][req]$};
    \node[below = of r] (r1) {$\ain[c][s][][req]$};
    \node[below = of r1] (r2) {$\aout[s][c][][done]$};
    \node[below = of r2] (r3) {$\ain[s][c][][done]$};
    \evleq{l} {l1};
    \evleq{l1}{l2};
    \evleq{l2}{l3};
    \evleq{r}{r1};
    \evleq{r1}{r2};
    \evleq{r2}{r3};
    \evconflict l r;
  }
  \]
  It is easy to check that the determined choice condition of
  \cref{def:wb} does not hold for \q.
  \finex
  \end{example}

%\newpage
%%% Local Variables:
%%% mode: latex
%%% TeX-master: "main"
%%% End:

\subsection{Typing well-formedness}\label{sec:typing}
% !TEX root =  main.tex

\newcommand{\mdmsg}{\mathsf{\colorMsg{md}}}
\newcommand{\reqmsg}{\mathsf{\colorMsg{req}}}
\newcommand{\donemsg}{\mathsf{\colorMsg{done}}}

\newcommand{\der}{\vdash}
\newcommand{\typetp}[2]{\langle #1, #2 \rangle}

By \cref{def:interpretation}, non-wellformed g-choreographies $\aG$
are meaningless, i.e. $\Sem{\aG} = \bot$.
However, it is too expensive to check that $\Sem{\aG} \neq \bot$ via a
direct inspection of the event structure $\Sem{\aG}$.
Also, it is hard to see how to extend the semantics to refinable
communications, as their intended meaning is an infinite set of
possible realisations by concrete g-choreographies.
To circumvent this difficulty we formalise sufficient conditions for
well-formedness via a typing system.

Our typing discipline assigns to a g-choreography type
$\conf{\minP, \maxP}$ where $\minP \subseteq \lset$ and
$\maxP \subseteq \lset$.
Intuitively, $\minP$ and $\maxP$ are respectively the labels of the
first and last events in the g-choreography.
Our judgements have the form
\[
  \tj  \CxtPi \aG \minP \maxP
\]
and their intended meaning is: the g-choreography $\aG$ has a defined
semantics and it has type $\conf{\minP, \maxP}$ under the assumption
that its participants are those in $\CxtPi \subseteq \PartSet$.
\begin{remark}
  We could avoid the use of the context $\CxtPi$ in judgements; we however
  prefer to explicitly list relevant participants for clarity.  
\end{remark}
In the following we illustrate the typing rules, by defining side conditions,
explaining the notation, and relating the rules to the semantics of
choreographies in \cref{sec:choref}.
\cref{fig:typing} collects all the rules for convenience; in
commenting the rules, we motivate their 
soundness (cf. Theorem~\ref{thr:soundness}).

% !TEX root =  main.tex

%%%%%%%%% Type system %%%%%%%%

\begin{figure}[ht]
  \[\begin{array}{c@{\hspace{10mm}}c}
		\arule{}{\tj \emptyset \gempty \emptyset \emptyset}{t-emp}
		&
		\arule{\minP = \maxP = \{\aout, \ain\}}{\tj  {\{\p,\q\} } \gint \minP \maxP}{t-int}
		\\[30pt]
		\multicolumn 2 c {
		\arule{
		\tj  {\CxtPi_1} {\aG_1} {\minP_1} {\maxP_1}
		\qquad
		\tj  {\CxtPi_2} {\aG_2} {\minP_2} {\maxP_2}
		}{
		\tj  {\CxtPi_1\cup \, \CxtPi_2} {\gseq[][\aG_1][\aG_2]}
		{ \minP_1\cup (\minP_2 - \Pi_1)} {  \maxP_2 \cup (\maxP_1 -
		\Pi_2)}
		}{
		t-seq
		}
		}
		\\[30pt]
		\multicolumn 2 c {
		\arule{
		\tj  {\CxtPi_1} {\aG_1} {\minP_1} {\maxP_1}
		\qquad
		\tj  {\CxtPi_2} {\aG_2} {\minP_2} {\maxP_2}
		\qquad \CxtPi_1 \cap \CxtPi_2 = \emptyset
		}{
		\tj  {\CxtPi_1\cup\CxtPi_2} {\gpar[][\aG_1][\aG_2]}
		{\minP_1\cup\minP_2} {\maxP_1\cup\maxP_2}
		}{
		t-par
		}
		}
		\\[30pt]
		\multicolumn 2 c {\arule{
		\tj  \CxtPi {\aG_1} {\minP_1} {\maxP_1}
		\qquad
		\tj  \CxtPi {\aG_2} {\minP_2} {\maxP_2}
		\qquad
		\compch {\minP_1} {\minP_2}\CxtPi
		}{
		\tj  \CxtPi {\gcho[][\aG_1][\aG_2]} {\minP_1\cup\minP_2}
		{\maxP_1\cup\maxP_2}
		}{
		t-ch
		}
		}
	 \end{array}
\]
\caption{Typing rules for g-choreographies.}
\label{fig:typing}
\end{figure}

		% \arule{
%%\{\p,\q\} \subseteq \CxtPi
%%\qquad
%\minP = \maxP = \{\aout, \ain\}
%}{
%\tj  {\{\p,\q\} } \gint \minP \maxP
%}{
%t-int
%}
%\\[30pt]
%\arule{
%\begin{array}{c}
%\esbj[\minP]=\esbj[\maxP]= \CxtPi
%\qquad
%\esbj[(\minP \cap \lset^!)] = \{\p\}
%\qquad
%\co{(\minP \cap \lset^!)}\subseteq\minP
%\qquad
%\lfloor \minP \rfloor
%\\[5pt]
%\forall i. \exists \ptp[C]\in\CxtPi. \ain[C][B_i][][{\msg[m_i]}][]\in \maxP
%\end{array}
%}{
%\tj  \CxtPi {\refgint[@][{\q_1}, \ldots, {\q_n}][{\msg_1} \ldots {\msg_n}] } \minP \maxP
%}{
%t-ref
%}
%\arule{
%\tj  {\CxtPi_1} {\aG_1} {\minP_1} {\maxP_1}
%\qquad
%\tj  {\CxtPi_2} {\aG_2} {\minP_2} {\maxP_2}
%}{
%\tj  {\CxtPi_1\cup\CxtPi_2} {\gseq[][\aG_1][\aG_2]} 
%	{\minP_1\cup\rest{\minP_2}{\CxtPi_2\setminus\CxtPi_1}} {\maxP_1\cup\rest{\maxP_2}{\CxtPi_1\setminus\CxtPi_2}}
%}{
%t-seq
%}
%\arule{
%\tj  {\CxtPi_1} {\aG_1} {\minP_1} {\maxP_1}
%\qquad
%\tj  {\CxtPi_2} {\aG_2} {\minP_2} {\maxP_2}
%}{
%\tj  {\CxtPi_1\uplus\CxtPi_2} {\gpar[][\aG_1][\aG_2]} {\minP_1\cup\minP_2} {\maxP_1\cup\maxP_2}
%}{
%t-par
%}
%\\[30pt]

%%% Local Variables:
%%% mode: latex
%%% TeX-master: "main"
%%% End:

\paragraph*{Interaction}
Define the mapping $\widehat{\_}: 2^\lset \to \PartSet \to 2^{\lset}$
by $\widehat{L}(\ptp) = \Set{\al \in L \mid \esbj[\al] = \ptp}$.
Then
\[
  L = \bigcup_{\ptp \in \PartSet} \widehat{L}(\ptp)
\]
so that we can see any $L \subseteq \lset$ as a family of sets of
(labels of) actions indexed over $\PartSet$.
Note that if $L$ is finite then $\widehat{L}(\ptp) \neq \emptyset$ for
finitely many $\ptp$.
Now, inspecting the rule for interaction:
\[\arule{
	 \minP = \maxP = \{\aout, \ain\}
  }{
	 \tj  {\{\p,\q\} } \gint \minP \maxP
  }{
	 t-int
  }
\]
we see that $\Set{\p,\q} = \PartSet(\gint)$; also we know that $\gint$
has a defined semantics (recall that $\PartSet(\aG)$ are the
participants occurring in $\aG$):
\[\begin{array}{c}
  \Sem{\gint} =  (\Set{\ae_1, \ae_2}, \Set{\ae_1 < \ae_2}, \emptyset,
  \Set{\ae_1 \mapsto \aout, \; \ae_2 \mapsto \ain[B][A]})
  \qquad\text{hence}
  \\[1em]
  \widehat{\minP}(\p) = \widehat{\maxP}(\p) = \Set{\aout} = \min\, (\Proj{\Sem{\gint}}{\p})
  \qand
  \widehat{\minP}(\p) = \widehat{\maxP}(\q) = \Set{\ain[B][A]} = \min\, (\Proj{\Sem{\gint}}{\q})
\end{array}\]
The distinction among minimal and maximal elements in a singleton
poset is clearly immaterial; it becomes sensible in case of the
subsequent rules.
\paragraph*{Sequential composition}
If $L \subseteq \lset$ and $\Pi \subseteq \ptpset$ then set
$L - \Pi = \Set{\al \in L \mid \esbj[\al] \not \in \Pi}$. Then the
rule is:
\[
\arule{
\tj  {\CxtPi_1} {\aG_1} {\minP_1} {\maxP_1}
\qquad
\tj  {\CxtPi_2} {\aG_2} {\minP_2} {\maxP_2}
}{
\tj  {\CxtPi_1\cup \, \CxtPi_2} {\gseq[][\aG_1][\aG_2]} 
	{ \minP_1\cup (\minP_2 - \Pi_1)} {  \maxP_2 \cup (\maxP_1 - \Pi_2)}
}{
t-seq
}
\]
By induction $\Sem{\aG_i} = \eset_i \neq \bot$ for $i = 1, 2$, hence
$\Sem{\gseq[][\aG_1][\aG_2]} = \rseq [\Sem{\aG_1}] [\Sem{\aG_2}]$ is
defined.
Let $\eset_i = \Sem{\aG_i}$ and, for each indexed over $x$ in the set
$\MaxC{\eset_1}$ of maximal configurations of $\eset_1$ , $\eset_x$ be
disjoint event structures isomorphic to $\eset_2$.
Then by definition, the order relation $\leq_{\textsf{seq}}$ of
$\rseq [\Sem{\aG_1}] [\Sem{\aG_2}]$ is the relation;
\[
  \leq_1 ~ \cup \bigcup_{x \in \MaxC{\eset_1}} \left( \leq_x
	 \; \cup ~\Set{ (\ae_1, \ae_2) \in x \times \aE_x \sst 
		\esbj[\lambda_1(\ae_1)] = \esbj[\lambda_x(\ae_2)]}
  \right)
\]
where $\leq_x$ is the order relation of $\eset_x$, and
$\lambda_x(\ae_2) = \lambda_2(\ae_2)$ by construction.  If $\ae$ is
minimal with respect to $\leq_{\textsf{seq}}$ then either it is such
with respect to $\leq_1$, or $\ae \in \aE_x$ is minimal with respect to $\leq_x$ and
$\esbj[\lambda_1(\ae')] \neq \esbj[\lambda_x(\ae)]$ for all maximal
$\ae' \in \aE_1$. If we consider the projection
$\Proj{\Sem{\gseq[][\aG_1][\aG_2]}}{\p}$ for any participant $\p$ of
$\gseq[][\aG_1][\aG_2]$, then by definition all event labels have the
same subject $\p$.
So if $\p \in \PartSet(\aG_1)$ then each maximal configuration of
$\Sem{\aG_1}$ has an event whose label has subject \p, hence
$\min (\Proj{\Sem{\gseq[][\aG_1][\aG_2]}}{\p})$ are exactly the
minimal events in $\min (\Proj{\eset_1} {\p})$.
Otherwise,
$\min (\Proj{\Sem{\gseq[][\aG_1][\aG_2]}}{\p}) = \min (\Proj{\eset_2}
{\p})$.
By induction we know that $\Pi_i = \PartSet(\aG_i)$ and
$\widehat{\minP_i}(\p) = \min (\Proj{\eset_i} {\p})$.
Hence
\[\begin{array}{lll}
	 \min (\Proj{\Sem{\gseq[][\aG_1][\aG_2]}}{\p}) & =
	 & \min (\Proj{\eset_1} {\p}) \cup \Set{\ae \in \min (\Proj{\eset_2} {\p}) \mid \p \not \in  \PartSet(\aG_1)}
	 \\ & =
	 & \widehat{\minP_1}(\p) \cup (\widehat{\minP_2 - \Pi_1} (\p))
	 \\ & =
	 & (\widehat{\minP_1 \cup (\minP_2 - \Pi_1)})(\p)
  \end{array}
\]
where of course either $\widehat{\minP_1}(\p)$ or
$\widehat{\minP_2 - \Pi_1}(\p)$ must be empty.
Similarly we get
$(\widehat{\maxP_2 \cup ( \maxP_1 - \Pi_2 )}) (\p) = \max
(\Proj{\Sem{\gseq[][\aG_1][\aG_2]}}{\p})$.
\begin{example}\label{exm:typingSeq}\rm
  Consider typing
  ${\gseq[][{\gint[][c][req][s]}][{\gint[][s][done][c]}]}$; then we
  have:
  \[
	 \prooftree
	 \prooftree
		\minP_1  = \maxP_1 = \{\aout[c][s][][\reqmsg], \ain[c][s][][\reqmsg]\}
	\justifies
		\Set{\ptp[C],\ptp[S]} \der \gint[][c][req][s] : \typetp{\minP_1}{ \maxP_1}
	\endprooftree
	\quad
	\prooftree
		\minP_2  = \maxP_2 = \{\aout[s][c][][\donemsg], \ain[s][c][][\donemsg]\}
	\justifies
		\Set{\ptp[C],\ptp[S]} \der \gint[][s][done][c]  : \typetp{\minP_2}{ \maxP_2}
		\endprooftree
		\justifies
		\Set{\ptp[C],\ptp[S]} \der \gseq[][{\gint[][c][req][s]}][{\gint[][s][done][c]}]  : \typetp{\minP_1}{ \maxP_2}
		\endprooftree
  \]
  because
  $\Set{\ptp[C],\ptp[S]} \cup \Set{\ptp[C],\ptp[S]} =
  \Set{\ptp[C],\ptp[S]}$,
  $\minP_1 \cup (\minP_2 - \Set{\ptp[C],\ptp[S]}) = \minP_1$ since
  $\minP_2 - \Set{\ptp[C],\ptp[S]} = \emptyset$, and
  $\maxP_2 \cup (\maxP_1 - \Set{\ptp[C],\ptp[S]}) = \maxP_2$ as
  $\maxP_1 - \Set{\ptp[C],\ptp[S]} = \emptyset$.  Similarly, typing
  $\gseq[][{\gint[][c][md][b]}][{\gint[][b][md][s]}]$ we obtain:
  \[
	 \prooftree
	 \prooftree
		\minP_3  = \maxP_3 = \{\aout[c][b][][\mdmsg], \ain[c][b][][\mdmsg]\}
	\justifies
		\Set{\ptp[C],\ptp[B]} \der \gint[][c][md][b] : \typetp{\minP_3}{ \maxP_3}
	\endprooftree
	\quad
	\prooftree
		\minP_4  = \maxP_4 = \{\aout[b][s][][\mdmsg], \ain[b][s][][\mdmsg]\}
	\justifies
		\Set{\ptp[B],\ptp[S]} \der \gint[][s][done][c]  : \typetp{\minP_4}{ \maxP_4}
		\endprooftree
		\justifies
		\Set{\ptp[B],\ptp[C],\ptp[S]} \der\gseq[][{\gint[][c][md][b]}][{\gint[][b][md][s]}] : \typetp{\minP_5}{ \maxP_5}
		\endprooftree
  \]
  where
  \[
	 \begin{array}{rccl}
		\minP_5 = \minP_3 \cup (\minP_4 - \Set{\ptp[C],\ptp[B]}) =  \minP_3 \cup \Set{\ain[b][s][][\mdmsg]}
		& = & \Set{ \aout[c][b][][\mdmsg], \ain[c][b][][\mdmsg] ,\ain[b][s][][\mdmsg]} & \mbox{and} \\ [1mm]
		\maxP_5 = \maxP_4 \cup (\maxP_3 - \Set{\ptp[B],\ptp[S]}) = \maxP_4 \cup \Set{\aout[c][b][][\mdmsg]} 
		& = &  \Set{ \aout[c][b][][\mdmsg], \aout[b][s][][\mdmsg], \ain[b][s][][\mdmsg] }
	 \end{array}
  \]
\end{example}

\paragraph*{Parallel composition}
The rule is:
\[
  \arule{
	 \tj  {\CxtPi_1} {\aG_1} {\minP_1} {\maxP_1}
	 \qquad
	 \tj  {\CxtPi_2} {\aG_2} {\minP_2} {\maxP_2}
	 \qquad \CxtPi_1 \cap \CxtPi_2 = \emptyset
  }{
	 \tj  {\CxtPi_1\cup\CxtPi_2} {\gpar[][\aG_1][\aG_2]}
	 {\minP_1\cup\minP_2} {\maxP_1\cup\maxP_2}
  }{
	 t-par
  }
\]
By induction we may suppose that, for $i = 1, 2$, $\CxtPi_i$ equals
$\PartSet(\aG_i)$ and $\Sem{\aG_i} = \eset_i \neq \bot$.
Hence, the condition $ \CxtPi_1 \cap \CxtPi_2 = \emptyset$ implies
that $\lambda_1(\aE_1) \cap \lambda_2(\aE_2) = \emptyset$, where
$\aE_i$ and $\lambda_i$ are the carrier and the labeling mapping of
$\eset_i$ respectively, so that
$\Sem{\gpar[][\aG_1][\aG_2]} = \Sem{\aG_1} \Tensor \Sem{\aG_2}$ is
defined.
Again by induction, for all participants $\p$ of $\aG_i$ we have that:
\[
  \widehat{\minP_i}(\p) = \min\, (\Proj{\Sem{\aG_i}}{\p})
  \qqand
  \widehat{\maxP_i}(\p) = \max\, (\Proj{\Sem{\aG_i}}{\p})
\] 
By definition of the tensor product, we know that
$\leq_{\eset_1\Tensor \eset_2}$ is just
$\leq_{\eset_1} \cup \leq_{\eset_2}$, which is a disjoint union (and
the same holds of the $\conflict_{\eset_1\Tensor \eset_2}$
relation).
Observing that
\[
  \widehat{\minP_1 \cup \minP_2}(\p) = \Set{\al \in \minP_1 \cup \minP_2 \mid \esbj[\al] = \p} =
  \Set{\al \in \minP_1 \mid \esbj[\al] = \p}  \cup \Set{\al \in \minP_2 \mid \esbj[\al] = \p} = \widehat{\minP_1}(\p) \cup \widehat{\minP_2}(\p) 
\] 
and similarly that
$\widehat{\maxP_1 \cup \maxP_2}(\p) = \widehat{\maxP_1}(\p) \cup
\widehat{\maxP_2}(\p)$, we conclude that, for all participants $\p$ of
$\gpar[][\aG_1][\aG_2]$:
\[
  \widehat{\minP_1 \cup \minP_2}(\p) = \min\, (\Proj{\Sem{\gpar[][\aG_1][\aG_2]}}{\p})
  \qqand
  \widehat{\maxP_1 \cup \maxP_2}(\p) = \max\, (\Proj{\Sem{\gpar[][\aG_1][\aG_2]}}{\p}).
\]

\paragraph*{Choice}

Two sets of labels $U, V \subseteq \lset$ are \emph{output uniform} if
$U \cap V = \emptyset$ and $U \cup V \subseteq \lset^!$; likewise, $U$
and $V$ are \emph{input uniform} if $U \cap V = \emptyset$ and
$U \cup V \subseteq \lset^?$.
Then the rule for typing choice is:
\[
  \arule{\tj  \CxtPi {\aG_1} {\minP_1} {\maxP_1}
	 \qquad
	 \tj  \CxtPi {\aG_2} {\minP_2} {\maxP_2}
	 \qquad
	 \compch {\minP_1} {\minP_2}\CxtPi
  }{
	 \tj  \CxtPi {\gcho[][\aG_1][\aG_2]} {\minP_1\cup\minP_2} {\maxP_1\cup\maxP_2}
  }{
	 t-ch
  }
\]
where the condition $\compch {\minP_1} {\minP_2}\CxtPi$ is defined by
the clauses:

\begin{enumerate}
\item \label{active} there is a unique $\p \in \Pi$ such that
  $\widehat{\minP_1} (\p)$ and $\widehat{\minP_2} (\p)$ are output
  uniform and both non-empty;
\item \label{passive} for all $\q \neq \p \in \Pi$,
  $\widehat{\minP_1} (\q)$ and $\widehat{\minP_2} (\q)$ are input
  uniform and $\widehat{\minP_1} (\q) = \emptyset$ if and only if
  $\widehat{\minP_2} (\q) = \emptyset$.
\end{enumerate}
By induction, for $i = 1,2$ $\Sem{\aG_i} \neq \bot$, the participants
of $\aG_i$ are $\Pi$ and
$\widehat{\minP_i}(\rr) = \min (\Proj{\eset_i} {\rr})$ and
$\widehat{\maxP_i}(\rr) = \max (\Proj{\eset_i} {\rr})$ for all
$\rr \in \Pi$.
Let $\Sem{\aG_i} = \eset_i =(\aE_i, \leq_i, \conflict_i, \lambda_i)$
and $\eset = \eset_0 + \eset_1$. By condition \ref{active} above, and
remembering the identification of $\ae$ with $\lambda_i(\ae)$, we have
that
$\widehat{\minP_i} (\p) = \min (\Proj{\eset_i} {\p}) \neq \emptyset$
for both $i=1,2$, so that $\eset_0, \eset_1 \neq \emptyev$.

Again by \ref{active}, we know that $\p$ is active in both $\eset_i$,
since $\min (\Proj{\eset_1} \p)$ and $\min (\Proj{\eset_2} \p)$ are
output uniform, while condition \ref{passive} implies that all
$\q\in \Pi\setminus \p$ are passive, since
$\min (\Proj{\eset_i} {\q})$ are input uniform. We conclude that
$\wb[\eset_0][\eset_1]$ and hence that
$\Sem{\aG_1 + \aG_2} \neq \bot$.
Now that
$\widehat{\minP_1 \cup \minP_2}(\rr) = \min (\Proj{\Sem{\aG_1 +
	 \aG_2}}{\rr})$ and
$\widehat{\maxP_1 \cup \maxP_2}(\rr) = \max (\Proj{\Sem{\aG_1 +
	 \aG_2}}{\rr})$ for all $\rr \in \PartSet(\aG_1 + \aG_2)$ follows
by induction.

\begin{example}\label{exm:typingChoice}
{\em
By rule $\textsc{t-ch}$ we can type e.g.
\[
\prooftree
	\prooftree
		\minP_1  = \maxP_1 = \{\aout[c][s][][\reqmsg], \ain[c][s][][\reqmsg]\}
	\justifies
		\Set{\ptp[C],\ptp[S]} \der \gint[][c][req][s] : \typetp{\minP_1}{ \maxP_1}
	\endprooftree
	\quad
	\prooftree
		\minP_6  = \maxP_6 = \{\aout[c][s][][\donemsg], \ain[c][s][][\donemsg]\}
	\justifies
		\Set{\ptp[C],\ptp[S]} \der \gint[][c][done][s]  : \typetp{\minP_6}{ \maxP_6}
		\endprooftree
		\qquad
		\compch {\minP_1} {\minP_6}{ \Set{\ptp[C],\ptp[S]} }
\justifies
	\Set{\ptp[C],\ptp[S]} \der \gcho[][{\gint[][c][req][s]}][{\gint[][c][done][s]}]  : \typetp{\minP_1 \cup \minP_2}{ \maxP_1 \cup \maxP_2}
\endprooftree
\]
in fact, $\compch {\minP_1} {\minP_6}{ \Set{\ptp[C],\ptp[S]} }$ holds, being $\ptp[C]$ the unique participant in
$\Set{\ptp[C],\ptp[S]}$ such that $\widehat{\minP_1}(\ptp[C])$ and
$\widehat{\minP_6}(\ptp[C])$ are output uniform, and the remaining
$\ptp[S]$ is such that $\widehat{\minP_1}(\ptp[S])$ and
$\widehat{\minP_6}(\ptp[S])$ are input uniform. However none of the
following are typable; recall that
$\Set{\ptp[C],\ptp[S]} \der \gint[][c][req][s] : \typetp{\minP_1}{
  \maxP_1}$,
$\Set{\ptp[C],\ptp[S]} \der \gint[][s][done][c] : \typetp{\minP_2}{
  \maxP_2}$ and
$\Set{\ptp[C],\ptp[B]} \der \gint[][c][md][b] : \typetp{\minP_3}{
  \maxP_3}$. Then:
\begin{description}
\item $\aG_1 \equiv \gcho[][{\gint[][c][req][s]}][{\gint[][c][req][s]}]$: this is because  $\widehat{\minP_1}(\ptp[C])$ cannot be disjoint from itself;
\item $\aG_2 \equiv \gcho[][{\gint[][c][req][s]}][{\gint[][s][done][c]}]$: in this case we have that neither 
	$\widehat{\minP_1}(\ptp[C]) \cup \widehat{\minP_2}(\ptp[C]) = \Set{ \aout[c][s][][\reqmsg], \ain[s][c][][\donemsg] }$ nor
	$\widehat{\minP_1}(\ptp[S]) \cup \widehat{\minP_2}(\ptp[S]) = \Set{ \ain[c][s][][\reqmsg], \aout[s][c][][\donemsg] }$ are output uniform;
\item $\aG_3 \equiv \gcho[][{\gint[][c][req][s]}][{\gint[][c][md][b]}]$: because $\Set{\ptp[C],\ptp[S]} \neq \Set{\ptp[C],\ptp[B]}$.
\end{description}

A more complex case is the following (continuing example \ref{exm:typingSeq}):
\[
\prooftree
	\prooftree
		\minP_1  = \maxP_1 = \{\aout[c][s][][\mdmsg], \ain[c][s][][\mdmsg]\}
	\justifies
		\Set{\ptp[C],\ptp[S]} \der \gint[][c][md][s] : \typetp{\minP_1}{ \maxP_1}
	\endprooftree
	\qquad
	\Set{\ptp[C],\ptp[S]} \der \gseq[][{\gint[][c][req][s]}][{\gint[][s][done][c]}]  : \typetp{\minP_2}{ \maxP_3}
\justifies
	\Set{\ptp[C],\ptp[S]} \der \gcho[][{\gint[][c][md][s]}][{\gseq[][{\gint[][c][req][s]}][{\gint[][s][done][c]}]}]
	: \typetp{\minP_1 \cup \minP_2}{\maxP_1 \cup \maxP_3}
\endprooftree
\]
because $\ptp[C]$ is the unique participant in $\Set{\ptp[C],\ptp[S]}$ such that $\widehat{\minP_1}(\ptp[C])$ and $\widehat{\minP_2}(\ptp[C])$
are output uniform, and the remaining $\ptp[S]$ is such that $\widehat{\minP_1}(\ptp[S])$ and $\widehat{\minP_2}(\ptp[S])$ are input
uniform, namely condition $\compch {\minP_1} {\minP_2}{\Set{\ptp[C],\ptp[S]}}$ is satisfied.

On the other hand the choreography $\gcho[][{\gseq[][{\gint[][c][md][b]}][{\gint[][b][md][s]}]}][{\gseq[][{\gint[][c][req][s]}][{\gint[][s][done][c]}]}]$
is not typeable because knowing from example \ref{exm:typingSeq} that
\[
\Set{\ptp[B],\ptp[C],\ptp[S]} \der\gseq[][{\gint[][c][md][b]}][{\gint[][b][md][s]}] : \typetp{\minP_6}{ \maxP_6}
\]
we have that $\Set{\ptp[B],\ptp[C],\ptp[S]} \neq \Set{\ptp[C],\ptp[S]}$, so that rule $\textsc{t-ch}$ doesn't apply.

}\end{example}

In summary we have proved the following result.
\begin{theorem}[Soundness]\label{thr:soundness}
  If $~\tj \CxtPi {\aG} {\minP} {\maxP}$ is derivable then
  $\Sem{\aG} \neq \bot$, $\Pi = \PartSet(\aG)$,
  and
  \[
	 \widehat{\minP}(\p) = \min (\Proj{\Sem{\aG}}{\p})
	 \qqand
	 \widehat{\maxP}(\p) = \max (\Proj{\Sem{\aG}}{\p})
  \]
   holds for all $\p \in \Pi$.
\end{theorem}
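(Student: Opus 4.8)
The plan is to proceed by structural induction on the derivation of $\tj \CxtPi \aG \minP \maxP$, which by inspection of \cref{fig:typing} reduces to a case analysis on the last rule applied, i.e.\ on the outermost operator of $\aG$. In each case the induction hypothesis supplies, for every immediate subderivation $\tj {\CxtPi_i} {\aG_i} {\minP_i} {\maxP_i}$, the full conjunction of the three claims: $\Sem{\aG_i} \neq \bot$, $\CxtPi_i = \PartSet(\aG_i)$, and $\widehat{\minP_i}(\p) = \min (\Proj{\Sem{\aG_i}}{\p})$ together with $\widehat{\maxP_i}(\p) = \max (\Proj{\Sem{\aG_i}}{\p})$ for all $\p \in \CxtPi_i$. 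Carrying all three properties simultaneously is essential, since definedness of the composite semantics hinges on the participant sets and on the minima and maxima of the components.

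The two base cases are immediate. For \textsc{t-emp} we have $\aG = \gempty$, $\CxtPi = \emptyset$, and $\minP = \maxP = \emptyset$; the semantics of $\gempty$ is the empty event structure, $\PartSet(\gempty) = \emptyset$, and the min/max equations hold vacuously. For \textsc{t-int} the argument is exactly the computation displayed in the \emph{Interaction} paragraph: $\Sem{\gint}$ is the two-event poset, $\PartSet(\gint) = \Set{\p,\q}$, and for each of the two participants the singleton projection delivers the announced minimum and maximum.

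The inductive cases for sequential and parallel composition follow the calculations already carried out above. For \textsc{t-seq} the hypothesis yields $\eset_i = \Sem{\aG_i} \neq \bot$, so $\rseq[\Sem{\aG_1}][\Sem{\aG_2}]$ is defined; the explicit description of $\leq_{\textsf{seq}}$ then lets us compute $\min (\Proj{\Sem{\gseq[][\aG_1][\aG_2]}}{\p})$ and $\max (\Proj{\Sem{\gseq[][\aG_1][\aG_2]}}{\p})$ by the case split on whether $\p \in \PartSet(\aG_1)$, matching $\widehat{\minP_1 \cup (\minP_2 - \CxtPi_1)}$ and $\widehat{\maxP_2 \cup (\maxP_1 - \CxtPi_2)}$ respectively. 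For \textsc{t-par} the premise $\CxtPi_1 \cap \CxtPi_2 = \emptyset$ propagates, via $\CxtPi_i = \PartSet(\aG_i)$, to disjointness of the label sets $\lambda_1(\aE_1)$ and $\lambda_2(\aE_2)$, which is precisely what makes the tensor $\Sem{\aG_1} \Tensor \Sem{\aG_2}$ defined; since both $\leq$ and $\conflict$ of the tensor are disjoint unions, the minima and maxima of each projection are inherited unchanged, giving $\widehat{\minP_1 \cup \minP_2}(\p)$ and $\widehat{\maxP_1 \cup \maxP_2}(\p)$.

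The delicate step, and the one I expect to be the main obstacle, is the choice case \textsc{t-ch}, where definedness of $\Sem{\aG_1 + \aG_2}$ is not automatic but requires well-branchedness $\wb[\eset_0][\eset_1]$. The crux is to show that the syntactic side condition $\compch {\minP_1} {\minP_2}\CxtPi$ --- namely \cref{active} and \cref{passive} --- captures exactly this semantic requirement. Using the hypothesis to identify $\widehat{\minP_i}(\rr)$ with $\min (\Proj{\eset_i}{\rr})$, \cref{active} forces a unique $\p$ whose minima in both branches are non-empty and output uniform, i.e.\ $\p$ is active in both $\eset_1$ and $\eset_2$, while \cref{passive} forces every other $\rr \in \Pi \setminus \p$ to be passive with matching emptiness across the branches. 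Establishing that this is equivalent to $\wb[\eset_0][\eset_1]$ is what links typing to the event-structure semantics; once $\Sem{\aG_1 + \aG_2} \neq \bot$ is secured, the equations for $\minP_1 \cup \minP_2$ and $\maxP_1 \cup \maxP_2$ follow from the hypothesis together with the fact that the $+$ operator takes the union of the two branch structures. Finally, the participant-set identity ($\CxtPi_1 \cup \CxtPi_2 = \PartSet(\aG)$ for \textsc{t-seq} and \textsc{t-par}, and the shared-context analogue $\CxtPi = \PartSet(\aG)$ for \textsc{t-ch}) is maintained in every case by routine bookkeeping.
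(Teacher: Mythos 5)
Your proposal is correct and follows essentially the same route as the paper: the paper's proof is precisely an induction on the typing derivation whose cases are the rule-by-rule computations in Section~\ref{sec:typing} (interaction, sequential, parallel, choice), including the case split on $\p \in \PartSet(\aG_1)$ for \textsc{t-seq}, the disjointness-of-labels argument making the tensor defined for \textsc{t-par}, and the derivation of well-branchedness $\wb[\eset_0][\eset_1]$ from \cref{active} and \cref{passive} for \textsc{t-ch}. The only nuance worth noting is that soundness needs just the implication from the side condition $\compch{\minP_1}{\minP_2}\CxtPi$ to well-branchedness, not the equivalence you allude to; your argument uses only that direction, so nothing is missing.
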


\begin{corollary}\label{cor:soundness}
  If $\aG$ is a typable g-choreography then $\aG$ is
  well-formed.
  Moreover a typable $\aG$ has exactly one typing
  $\Pi \der \aG : \typetp \minP \maxP$.
\end{corollary}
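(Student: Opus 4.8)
The plan is to derive both halves of the corollary from the Soundness theorem (\cref{thr:soundness}): the well-formedness claim is an immediate consequence, while uniqueness of typing follows once one observes that \cref{thr:soundness} pins down every component of a derivable judgement as a function of $\aG$ alone. No new induction on g-choreographies is needed for the bulk of the argument.

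For well-formedness, assume $\aG$ is typable, so that $\tj\CxtPi\aG\minP\maxP$ is derivable for some $\CxtPi,\minP,\maxP$. By \cref{thr:soundness} we obtain $\Sem\aG\neq\bot$. The semantics was arranged (\cref{def:interpretation}) so that a non-well-formed g-choreography is meaningless, that is, $\Sem\aG=\bot$; reading this contrapositively, $\Sem\aG\neq\bot$ forces $\aG$ to be well-formed.

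For uniqueness, suppose $\tj\CxtPi\aG\minP\maxP$ and $\tj{\CxtPi'}\aG{\minP'}{\maxP'}$ are both derivable. By \cref{thr:soundness} both contexts equal $\PartSet(\aG)$, so $\CxtPi=\CxtPi'$. For the label sets, \cref{thr:soundness} gives, for every $\p\in\PartSet(\aG)$, the slice identities $\widehat{\minP}(\p)=\min(\Proj{\Sem\aG}\p)=\widehat{\minP'}(\p)$ and $\widehat{\maxP}(\p)=\max(\Proj{\Sem\aG}\p)=\widehat{\maxP'}(\p)$, whose right-hand sides depend only on $\aG$. Invoking the decomposition $\minP=\bigcup_{\p\in\PartSet}\widehat{\minP}(\p)$ noted in the discussion of the interaction rule, the desired equalities $\minP=\minP'$ and $\maxP=\maxP'$ reduce to the agreement of these slices over every participant in $\PartSet$; together with the assumed existence of a typing, this yields that $\aG$ has exactly one typing.

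The one point that is not immediate, and which I expect to be the main (if minor) obstacle, is that \cref{thr:soundness} supplies the slice identities only for $\p\in\PartSet(\aG)$, whereas the decomposition ranges over the full universe $\PartSet$. I would bridge this with the auxiliary invariant that in any derivable $\tj\CxtPi\aG\minP\maxP$ every label of $\minP$ and of $\maxP$ has its subject in $\CxtPi$, whence $\widehat{\minP}(\p)=\widehat{\minP'}(\p)=\emptyset$ for $\p\notin\PartSet(\aG)$ (matching that $\Proj{\Sem\aG}\p=\emptyset$ there). The invariant is established by a short induction on the typing derivation: the axioms \textsc{t-emp} and \textsc{t-int} introduce only labels whose subjects lie in the displayed context, and rules \textsc{t-seq}, \textsc{t-par}, \textsc{t-ch} build their label sets by unions and subject-based removals that never introduce subjects outside $\CxtPi_1\cup\CxtPi_2$. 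With the slices thus agreeing over all of $\PartSet$, the uniqueness of $\minP$ and $\maxP$ follows, completing the argument.
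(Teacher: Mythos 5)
Your proof is correct and takes essentially the same route the paper intends: the corollary is stated there as an immediate consequence of Theorem~\ref{thr:soundness}, with well-formedness read off from $\Sem{\aG}\neq\bot$ via \cref{def:interpretation}, and uniqueness from the fact that the theorem pins down $\CxtPi$, and the slices of $\minP$ and $\maxP$, as functions of $\aG$ alone. Your auxiliary invariant (every label in $\minP$ and $\maxP$ has its subject in $\CxtPi$, so the slices outside $\PartSet(\aG)$ are empty) is a small point of rigor that the paper leaves implicit, and the induction you sketch does establish it.
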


Our typing system is not complete.
For instance, the g-choreography
$\gpar[][{\gint[]}][{\gint[][a][@][c]}]$ is well-formed but it cannot
be typed because the rule \textsc{t-par} cannot be applied since \p\
occurs on both sides of the parallel.
In fact, this is the only obstacle to attain completeness and could be
removed by tracing in the types not only the minimal and maximal
communications, but also the communications of threads.
\begin{lemma}\label{lem:partyping}
  If $\tj \CxtPi {\aG} {\minP} {\maxP}$ is derivable then, for all $x
  \in \MaxC{\Sem{\aG}}$ and $\p \in \PartSet$, either
  $\max(\Proj{x}{\p})$ is empty or it is a singleton.
\end{lemma}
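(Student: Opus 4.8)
The plan is to argue by induction on the derivation of $\tj \CxtPi {\aG} {\minP} {\maxP}$, using Theorem~\ref{thr:soundness} to know that $\Sem{\aG} \neq \bot$ and $\CxtPi = \PartSet(\aG)$ at every node, and then reading off the shape of the maximal configurations produced by each semantic operation. Since for $\p \notin \PartSet(\aG)$ the projection $\Proj{x}{\p}$ is empty and the statement holds vacuously, it suffices to treat participants $\p$ occurring in $\aG$. For each rule I describe the maximal configurations of the resulting event structure and reduce the claim to the inductive hypotheses on the immediate subterms.

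The base cases are direct. Rule \textsc{t-emp} gives $\Sem{\gempty}$ with no events, so every projection, hence every maximum, is empty. Rule \textsc{t-int} gives $\Sem{\gint}$ with exactly two events $\ae_1 < \ae_2$ carrying distinct subjects; its only maximal configuration is $\Set{\ae_1,\ae_2}$, whose projection on either participant is a single event and therefore has a singleton maximum. The cases \textsc{t-par} and \textsc{t-ch} are equally routine. For \textsc{t-par} the side condition $\CxtPi_1 \cap \CxtPi_2 = \emptyset$ makes $\Sem{\gpar[][\aG_1][\aG_2]} = \Sem{\aG_1} \Tensor \Sem{\aG_2}$ a disjoint (conflict-free across factors) union, so each $x \in \MaxC{\Sem{\gpar[][\aG_1][\aG_2]}}$ splits as $x_1 \cup x_2$ with $x_i \in \MaxC{\Sem{\aG_i}}$ and any $\p$ occurs in at most one factor; thus $\Proj{x}{\p} = \Proj{x_i}{\p}$ and the induction hypothesis on $\aG_i$ applies. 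For \textsc{t-ch} the two summands of $\Sem{\aG_1} + \Sem{\aG_2}$ are in total conflict, so every maximal configuration is a maximal configuration of a single branch, again closing the case by the induction hypothesis.

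The delicate case is \textsc{t-seq}. Recall that $\Sem{\gseq[][\aG_1][\aG_2]} = \rseq [\Sem{\aG_1}] [\Sem{\aG_2}]$ glues, after each $x \in \MaxC{\Sem{\aG_1}}$, a fresh copy $\eset_x \cong \Sem{\aG_2}$, adding exactly the order edges $\Set{(\ae_1,\ae_2) \in x \times \aE_x \mid \esbj[\lambda_1(\ae_1)] = \esbj[\lambda_x(\ae_2)]}$, which place every event of $x$ below every same-subject event of $\eset_x$. Hence a maximal configuration of the sequence has the form $x \cup y_x$ with $x \in \MaxC{\Sem{\aG_1}}$ and $y_x \in \MaxC{\eset_x}$. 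Fixing $\p$, I distinguish whether $y_x$ contains an event with subject $\p$. If it does, the cross edges force every $\p$-event of $x$ below some $\p$-event of $y_x$, so $\max(\Proj{(x \cup y_x)}{\p}) = \max(\Proj{y_x}{\p})$, which is a singleton by the induction hypothesis transported along $\eset_x \cong \Sem{\aG_2}$. If it does not, then all $\p$-events lie in $x$ and $\max(\Proj{(x \cup y_x)}{\p}) = \max(\Proj{x}{\p})$, a singleton or empty by the induction hypothesis on $\Sem{\aG_1}$.

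The step I expect to demand the most care is precisely this last one: verifying that the glued cross edges make the last $\p$-events of the composite lie entirely within the second component whenever that component contains a $\p$-event, so that the maximum never splits across the two parts. Once this is checked, the three inductive cases combine to yield a singleton-or-empty maximum for every $\p \in \PartSet$ and every $x \in \MaxC{\Sem{\aG}}$, completing the proof.
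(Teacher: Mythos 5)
Your proof is correct and follows essentially the same approach as the paper's, which is only a one-line sketch: induction on the typing derivation with case analysis on the last rule applied, the crucial observation being that \textsc{t-par} forces the two contexts to be disjoint, so a participant's events never split across parallel factors. Your detailed treatment of \textsc{t-seq} (the cross edges placing every same-subject event of the first component below those of the glued copy, so the maximum never straddles the two parts) is sound and simply fills in what the paper leaves implicit.
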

\begin{proof}
  By induction on the derivation of $\tj \CxtPi \aG \minP \maxP$
  and case analysis of the last typing rule applied observing that
  rule \textsc{t-par} requires to partition the context in two
  disjoint sub-contexts.
\end{proof}
We remark that such a completeness result would be basically due to the
strictness of the conditions of \cref{def:wb}.
In fact, more general notions of well-branchedness would break
the completeness theorem.
For instance, we can weaken the conditions of \cref{def:wb}
as follows.
\begin{itemize}
\item The projections on the event structures of the two branches may
  either be disjoint inputs (as per the current determined choice
  condition) or be isomorphic
\item there is a unique selector (as currently required in
  \cref{def:wb})
  and any other participant whose minimal actions are output
  have isomorphic projections on the two branches.
\end{itemize}
With this change, the g-choreography $\aG = \gcho[][\aG_1][\aG_2]$
where $\aG_1 = \gseq[][{\gint[]}][{\gint[][c][x][b]}]$ and
$\aG_2 = \gseq[][{\gint[][@][n]}][{\gint[][c][x][b]}]$ would become
well formed.
However, our system cannot type $\aG$ since both \p\ and \p[c] are
selectors in the choice of $\aG$.
Notice that $\aG_1$ violates the unique selector condition of
\cref{def:wb}, while it does not violate the more general conditions
above since \p[c] behaves the same in $\aG_1$ and $\aG_2$.

%%% Local Variables:
%%% mode: latex
%%% TeX-master: "main"
%%% End:

\section{Refinement}\label{sec:refinement}
% !TEX root =  main.tex

\newcommand{\RTint}{\textsc{\footnotesize r-int}}
\newcommand{\refines}{\;\textrm{ref}\;}

To the grammar of Definition \ref{def:refgg} we add a new construct
that we dub {\em refinable action}:
\begin{align*}
  \aG \bnfdef & \cdots \mid \refgint[@][{\msg_1} \ldots {\msg_n}][{\q_1}, \ldots, {\q_n}]
  & \text{refinable action}
\end{align*}
where $\vec \msg = \msg_1, \ldots, \msg_n$ and
$\vec \q = \q_1, \ldots, \q_n$ are non-empty tuples of the same lenght
of messages and participants such that the participants in
$\vec{\ptp[B]}$ are pairwise distinct.
Call {\em refinable} a g-choreography generated by the so extended
grammar; a g-choreography is {\em ground} or {\em non-refinable} if it
is derivable only with the productions of the grammar
in~\cref{def:refgg}.

\begin{definition}[Refines relation]\label{def:refinementRel}
  A ground g-choreography $\aG$ {\em refines}
  $\refgint[@][\vec{\msg}][\vec{B}]$, written {\em $\aG \refines
	 \refgint[@][\vec{\msg}][\vec{B}]$}, if
  \begin{enumerate}
  \item \label{def:refinementRel-i}
	 $\Sem{\aG} = \eset\neq \bot$;
  \item \label{def:refinementRel-ii}
	 $\esbj[{\min(\eset)}] = \Set{\p}$, by which we say that
	 $\p$ is the (unique) {\em initiator} of $\aG$;
  \item \label{def:refinementRel-iii} letting $\vec \msg = \msg_1,
	 \ldots, \msg_n$ and $\vec \q = \q_1, \ldots, \q_n$, for all $x \in
	 \MaxC{\eset}$ and $1 \leq h \leq n$ there exists $\ptp[C] \in
	 \PartSet(\aG)$ such that $ \ain[C][B_h][][{\msg[m_h]}][] \in \max
	 (\Proj{x}{\q_h})$.
  \end{enumerate}
\end{definition}
In words, $\aG$ refines $\refgint[@][\vec{\msg}][\vec{B}]$ if $\aG$ is
meaningful, that is well-formed, with a unique participant $\p$
initiating the interaction by some (necessarily distinct) output
actions, and such that in all branches, namely maximal configurations
$x$ of $\Sem{\aG}$,
each $\q_h$ eventually inputs $\msg_h$.

\begin{example}\label{exm:refinementRel}\rm
  The following
  \[
	 \gint[] \refines \refgint,
	 \quad
	 \gcho[][\gint][{\gseq[][{\gint[][a][n][b]}][\gint]}] \refines \refgint,
	 \qand
	 \gpar[][\gint][{\gint[][c][n][b]}] \refines \refgint
  \]
  are examples of refinement relations.
  \finex
\end{example}

Our next step is to devise sufficient conditions for substituting the
refinement action $\refgint[@][\vec{\msg}][\vec{B}]$ by some of its
ground refinements $\aG'$ in a context of the shape
$\aG[\refgint[@][\vec{\msg}][\vec{B}]]$ while ensuring
well-formednes of the resulting g-choreography $\aG[\aG']$. In view of
the previous section, an eligible tool is the typing system.
Hereafter, let $\aG'[\cdot]$ be a ground g-choreography with a hole
$[\cdot]$, namely a placeholder such that, if replaced by a ground
$\aG$ the resulting $\aG'[\aG]$ is a ground g-chreography.

In fact, observe that, barred for the axioms, the shape of the typing
judgement $\tj \CxtPi {\aG} {\minP} {\maxP}$ in the conclusion of each
rule only depends on contexts and types in the premises. Combining
this remark with Theorem \ref{thr:soundness}, we get the following
corollary.

\begin{corollary}\label{cor:compositionality}
  Suppose that
  $\Pi' \vdash \aG'[\cdot]: \langle \minP', \maxP' \rangle$
  is derivable in the type system extended by the axiom $\Pi \vdash [\cdot]: \langle \minP, \maxP \rangle$.
  Then for all $\aG$ such that $\Pi \vdash \aG: \langle \minP, \maxP \rangle$ is derivable, 
  the judgment $\Pi' \vdash {\aG'[\aG]} : \langle {\minP'}, {\maxP'} \rangle$ is derivable, so that $\Sem{\aG'[\aG]} \neq \bot$.
\end{corollary}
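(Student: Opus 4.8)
The plan is to proceed by induction on the derivation of $\Pi' \vdash \aG'[\cdot]: \langle \minP', \maxP' \rangle$ in the type system extended by the added axiom $\Pi \vdash [\cdot]: \langle \minP, \maxP \rangle$, exploiting the observation made just before the statement: apart from the axioms, every typing rule computes the context and the type of its conclusion solely from the contexts and types of its premises, and all side conditions ($\CxtPi_1 \cap \CxtPi_2 = \emptyset$ in \textsc{t-par}, the shared context together with $\compch{\minP_1}{\minP_2}{\CxtPi}$ in \textsc{t-ch}) likewise depend only on these data and never on the concrete syntax of the sub-choreographies. Since the hole occurs in exactly one syntactic position of $\aG'[\cdot]$, the added axiom is used exactly once in the derivation, namely at that leaf.

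For the base case, if the derivation is the single added axiom $\Pi \vdash [\cdot]: \langle \minP, \maxP \rangle$, then $\aG'[\cdot] = [\cdot]$, we have $\Pi' = \Pi$, $\minP' = \minP$, $\maxP' = \maxP$, and $\aG'[\aG] = \aG$, so the desired conclusion is exactly the assumed judgement $\Pi \vdash \aG: \langle \minP, \maxP \rangle$. If instead the derivation ends with a genuine axiom \textsc{t-emp} or \textsc{t-int}, then the hole does not occur in $\aG'[\cdot]$, hence $\aG'[\aG] = \aG'[\cdot]$ and the very same derivation witnesses the claim.

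For the inductive step, suppose the last rule is one of \textsc{t-seq}, \textsc{t-par}, \textsc{t-ch}. Because a context carries a single hole, exactly one immediate subterm is a context $\aG'_1[\cdot]$ containing $[\cdot]$, while the other, $\aG'_2$, is ground; the sub-derivation typing $\aG'_1[\cdot]$ lives in the extended system with the same added axiom, whereas that of $\aG'_2$ lives in the base system. Applying the induction hypothesis to the former yields a base-system derivation of $\aG'_1[\aG]$ carrying precisely the same context and type as the one produced for $\aG'_1[\cdot]$; the derivation of $\aG'_2$ is untouched, and $\aG'_1[\aG]$ is ground since $\aG$ is. As the two premises now carry identical contexts and types, every side condition of the last rule still holds and the rule computes the same conclusion, so re-applying it gives the base-system derivation $\Pi' \vdash \aG'[\aG]: \langle \minP', \maxP' \rangle$ (the symmetric case with the hole in the right subterm is identical). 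Finally, $\Sem{\aG'[\aG]} \neq \bot$ follows from this derivability by Theorem~\ref{thr:soundness}.

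The only delicate point is the inductive step's reliance on the shape-independence of the rules: one must verify, rule by rule, that no side condition inspects the syntax of a premise's subterm, so that replacing a sub-derivation by another one with the same context and type can never invalidate it. This is exactly the content of the remark preceding the corollary, which reduces what would otherwise be a substitution lemma to a routine structural induction.
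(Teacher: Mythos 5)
Your proof is correct and follows essentially the same route as the paper: the paper derives the corollary directly from the remark that, apart from the axioms, each rule's conclusion and side conditions depend only on the contexts and types of its premises, combined with Theorem~\ref{thr:soundness}. Your structural induction on the extended-system derivation merely spells out formally what the paper leaves implicit, namely that grafting a base-system derivation of $\Pi \vdash \aG : \langle \minP, \maxP \rangle$ onto the unique leaf where the added axiom is used preserves every rule application above it.
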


To put this corollary to use, we have to define an axiom schema for deducing $\tj  \CxtPi { \refgint[@][\vec{\msg}][\vec{B}] } {\minP} {\maxP}$
such that if $\tj  \CxtPi {\aG} {\minP} {\maxP}$ is derivable for some ground $\aG$, then $\aG$ refines $ \refgint[@][\vec{\msg}][\vec{B}]$.
To do that we need a preliminary lemma.

\begin{lemma}\label{lem:min-max}
  Let $\Sem{\aG} \neq \emptyev$ be ground and well-formed. If $x \in \MaxC{\Sem{\aG}}$ then
  \begin{enumerate}
  \item \label{lem:min-max-i} $\emptyset \neq \min (x) \subseteq \lset^!$ and $\emptyset \neq \max (x) \subseteq \lset^?$;
  \item \label{lem:min-max-ii} $\esbj[x] =  \PartSet(\aG)$.
  \end{enumerate}
\end{lemma}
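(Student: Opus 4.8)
The plan is to prove items~\ref{lem:min-max-i} and~\ref{lem:min-max-ii} simultaneously by structural induction on the ground g-choreography $\aG$, driven by the inductive clauses defining $\Sem{\cdot}$ on interaction, parallel, choice and sequential composition. I will use throughout the identification of an event with its label, so that $\min(x) \subseteq \lset^!$ reads as \emph{every $\leq$-minimal event of $x$ carries an output label}, and dually for $\max(x)$ and $\lset^?$. As $\Sem{\aG} \neq \emptyev$, the case $\gempty$ does not occur at the root; and whenever a proper subcomponent has empty semantics, its composition with the other degenerates (up to isomorphism) to the other subcomponent, so the claim reduces to the induction hypothesis on the nonempty one.

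The base case and the two \emph{static} inductive cases are routine. For $\aG = \gint$ the unique maximal configuration $x = \Set{\ae_1, \ae_2}$ has $\min(x) = \Set{\ae_1} \subseteq \lset^!$, $\max(x) = \Set{\ae_2} \subseteq \lset^?$ and $\esbj[x] = \Set{\p,\q} = \PartSet(\gint)$. For $\aG = \gpar[][\aG_1][\aG_2]$ the semantics is the disjoint union $\Sem{\aG_1} \Tensor \Sem{\aG_2}$; a maximal configuration splits as $x = x_1 \cup x_2$ with disjoint orders, so $\min$, $\max$ and $\esbj[\cdot]$ distribute over the union and the claim follows from the induction hypotheses together with $\PartSet(\aG) = \PartSet(\aG_1) \cup \PartSet(\aG_2)$. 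For $\aG = \gcho[][\aG_1][\aG_2]$ every maximal configuration of $\eset_1 + \eset_2$ is a maximal configuration of a single branch $\eset_i$, so item~\ref{lem:min-max-i} is immediate from the induction hypothesis; item~\ref{lem:min-max-ii} additionally uses that a well-formed, hence well-branched, choice (\cref{def:wb}) has $\PartSet(\aG_1) = \PartSet(\aG_2)$, whence $\esbj[x] = \PartSet(\aG_i) = \PartSet(\aG)$.

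I expect sequential composition $\aG = \gseq[][\aG_1][\aG_2]$ to be the main obstacle, because its semantics attaches to each maximal configuration $x_1$ of $\eset_1$ a disjoint copy $\eset_{x_1} \cong \eset_2$ and adds the causal pairs $(\ae_1,\ae_2) \in x_1 \times \aE_{x_1}$ precisely when the two events share a subject. A maximal configuration is then $x = x_1 \cup x_2$ with $x_2$ maximal in $\eset_{x_1}$. Since the added order only points from the first phase into the second, an event of $x_1$ stays minimal in $x$ iff it was minimal in $x_1$, while an event of $x_2$ stays minimal iff it was minimal in $x_2$ and no event of $x_1$ has its subject; invoking item~\ref{lem:min-max-ii} of the induction hypothesis ($\esbj[x_1] = \PartSet(\aG_1)$) this yields
\[
  \min(x) = \min(x_1) \cup \Set{\ae \in \min(x_2) \mid \esbj[\ae] \notin \PartSet(\aG_1)},
\]
and symmetrically, using $\esbj[x_2] = \PartSet(\aG_2)$,
\[
  \max(x) = \max(x_2) \cup \Set{\ae \in \max(x_1) \mid \esbj[\ae] \notin \PartSet(\aG_2)}.
\]
Both items then follow: the induction hypotheses give $\min(x) \subseteq \lset^!$ and $\max(x) \subseteq \lset^?$, non-emptiness holds because $\min(x) \supseteq \min(x_1) \neq \emptyset$ and $\max(x) \supseteq \max(x_2) \neq \emptyset$, and $\esbj[x] = \esbj[x_1] \cup \esbj[x_2] = \PartSet(\aG_1) \cup \PartSet(\aG_2) = \PartSet(\aG)$. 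The one point to verify carefully is that the copy $\eset_{x_1}$, being isomorphic to $\Sem{\aG_2}$, inherits the induction hypothesis (subjects and minimal/maximal labels are isomorphism-invariant), so that the two displayed characterisations are legitimate.
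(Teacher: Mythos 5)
Your proof is correct, and its skeleton is the same as the paper's: structural induction on $\aG$ with the same four cases, degenerate empty subterms dispatched by reduction to the non-empty component, and well-branchedness invoked in the choice case to get $\PartSet(\aG_1)=\PartSet(\aG_2)$. The genuine divergence is in the sequential case, where your argument is both different and sharper. The paper asserts that a maximal configuration decomposes as $x=x_1\cup x_2$ with $\min(x)=\min(x_1)$ and $\max(x)=\max(x_2)$, derives (\ref{lem:min-max-i}) from that, and then proves (\ref{lem:min-max-ii}) \emph{by contradiction}: it takes two maximal configurations $x,y$ with $\esbj[x]\neq\esbj[y]$ and refutes each way this could happen, using the cross-causality of sequential composition and maximality of $y$. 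You instead compute $\min(x)$ and $\max(x)$ exactly --- keeping the minimal events of the second phase whose subjects lie outside $\PartSet(\aG_1)$ and the maximal events of the first phase whose subjects lie outside $\PartSet(\aG_2)$ --- and then read both items off directly, (\ref{lem:min-max-ii}) being just $\esbj[x]=\esbj[x_1]\cup\esbj[x_2]=\PartSet(\aG_1)\cup\PartSet(\aG_2)=\PartSet(\aG)$ by the induction hypothesis. Two observations. First, your characterisation is the more accurate one: when $\PartSet(\aG_2)\not\subseteq\PartSet(\aG_1)$ the second phase genuinely contributes minimal events, so the paper's equation $\min(x)=\min(x_1)$ holds only as the inclusion $\min(x)\supseteq\min(x_1)$; this is harmless for the paper (only that inclusion and $\min(x)\subseteq\min(x_1)\cup\min(x_2)$ are actually needed), and your version agrees with the analysis the paper itself performs when justifying rule \textsc{t-seq} in \cref{sec:typing}. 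Second, your direct derivation of (\ref{lem:min-max-ii}) buys a shorter and more transparent argument than the paper's contradiction, at the modest price of the two side conditions you rightly flag: cross-causality points only from the first phase into the second, and the isomorphic copies of $\Sem{\aG_2}$ inherit the induction hypothesis.
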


\begin{proof}
  By induction over $\aG$.

  If $\aG \equiv \gint$ then $\Sem{\aG} = \Set{\aout < \ain}$ and the (\ref{lem:min-max-i})-(\ref{lem:min-max-ii}) are immediately verified.

  If $\aG \equiv \gpar[][\aG_1][\aG_2]$ we have that
  $\Sem{\aG} = \Sem{\aG_1} \otimes \Sem{\aG_2}$ by well-formednes. If
  either of the $\Sem{\aG_i}$ is $\emptyev$ the thesis follows
  immediately by induction, since the other one has to be
  $\neq \emptyev$. Suppose that $\Sem{\aG_i} \neq \emptyev$ for both
  $i=1,2$. By definition of $\otimes$ we have that for some non empty
  $x_i \in \MaxC{\Sem{\aG_i}}$ it is $x = x_1 \cup x_2$, from which
  part (\ref{lem:min-max-i}) of the lemma follows by induction.
  Similarly if $y \in \MaxC{\Sem{\aG}}$ then $y = y_1 \cup y_2$ for
  $y_i \in \MaxC{\Sem{\aG_i}}$, hence by induction
  $\esbj[{x_i}] = \esbj[{y_i}]$ for $i = 1, 2$, so that
  $\esbj[x] = \esbj[{x_1}] \cup\, \esbj[{x_2}] = \esbj[{y_1}] \cup
  \esbj[{y_2}] = \esbj[y]$ and we conclude that (\ref{lem:min-max-ii})
  holds.

  If $\aG \equiv \gseq[][\aG_1][\aG_2]$ then by definition of
  $\Sem{\aG} = \rseq [\Sem{\aG_1}] [\Sem{\aG_2}]$ any
  $x \in \MaxC{\Sem{\aG}}$ either $x \in \MaxC{\Sem{\aG_1}}$, or
  $x \in \MaxC{\Sem{\aG_2}}$ or there exist
  $x_1 \in \MaxC{\Sem{\aG_1}}$ and $x_2 \in \MaxC{\Sem{\aG_2}}$ with
  $x = x_1 \cup x_2$ and $\min(x) = \min(x_1)$, $\max(x) = \max(x_2)$.
  Now (\ref{lem:min-max-i}) follows by induction. To see
  (\ref{lem:min-max-ii}) suppose that $y \in \MaxC{\Sem{\aG}}$ and,
  toward a contradiction, assume that $\esbj[x] \neq \esbj[y]$: this
  is only possible if say
  $x \in \MaxC{\Sem{\aG_1}}\setminus\MaxC{\Sem{\aG_2}}$ and
  $y \in \MaxC{\Sem{\aG_2}}\setminus\MaxC{\Sem{\aG_1}}$, since
  otherwise if e.g. $x = x_1 \cup x_2$ as above then
  $\esbj[y] = \esbj[x_2]$ by induction, but then by definition of
  $ \rseq [\Sem{\aG_1}] [\Sem{\aG_2}]$ there is a pair of events
  $\ae_1 \in x_1$ and $\ae_2 \in x_2$ with
  $\esbj[{\lambda_1 (\ae_1)}] = \esbj[{\lambda_2(\ae_2)}]$ so that by
  construction $\ae_1 \leq_{\Sem{\aG}} \ae_2$.  This implies that
  there exist $\ae_3 \in y$ s.t.
  $\esbj[{\lambda_2 (\ae_3)}] = \esbj[{\lambda_2 (\ae_2)}]$ so that
  $\ae_1 \leq_{\Sem{\aG}} \ae_3$ contradicting the maximality of
  $y \in \MaxC{\Sem{\aG}}$.

  But if $x \in \MaxC{\Sem{\aG_1}}\setminus\MaxC{\Sem{\aG_2}}$ and
  $y \in \MaxC{\Sem{\aG_2}}\setminus\MaxC{\Sem{\aG_1}}$ then
  $x \neq y$ and $x \cup y \in \MaxC{\Sem{\aG}}$ contradicting the
  hypothesis that $x, y \in \MaxC{\Sem{\aG}}$.

  If $\aG \equiv \gcho[][\aG_1][\aG_2]$ and it is well-formed then
  $\PartSet(\aG_1) = \PartSet(\aG_2) = \PartSet(\aG)$ and there exists
  a unique active $\p \in \PartSet(\aG)$ in
  $\Sem{\aG} = \Sem{\aG_1} + \Sem{\aG_2}$ that is the subject of all
  events in $\min(x)$ for any $x \in \MaxC{\Sem{\aG}}$. This also
  implies that $\Sem{\aG} \neq \emptyev$, so that at least one of the
  two $\Sem{\aG_i}$ is such. By induction we have immediately
  (\ref{lem:min-max-i}). To prove (\ref{lem:min-max-ii}) let
  $y \in \MaxC{\Sem{\aG}}$. By definition of
  $\Sem{\aG_1} + \Sem{\aG_2}$ we have that
  $\MaxC{\Sem{\aG}} = \MaxC{\Sem{\aG_1}} \cup \MaxC{\Sem{\aG_2}}$, so
  that either $x$ and $y$ belong to the same $\MaxC{\Sem{\aG_i}}$,
  then $\esbj[x] = \esbj[y]$ by induction, or say
  $x \in \MaxC{\Sem{\aG_1}}$ and $y \in \MaxC{\Sem{\aG_2}}$: then by
  induction $\esbj[x] = \PartSet(\aG_1) = \PartSet(\aG_2) = \esbj[y]$
  and we are done.
\end{proof}

\begin{lemma}\label{lem:t-refConditions}
  Let $\refgint[@][\vec{\msg}][\vec{\q}]$ be a refinable action. If $\CxtPi \subseteq \PartSet$ and
  $\minP, \maxP \subseteq \lset$ are such that
  \begin{enumerate}
  \item $\esbj[{\minP}] =  \esbj[{\maxP}] = \CxtPi$,
  \item $\esbj[{(\minP \cap  \lset^! )}] = \Set{\p}$, and
  \item assuming $\vec \q = \q_1, \ldots, \q_n$ and $\vec{\msg} =
	 \msg_1, \ldots, \msg_n$, for all $1 \leq h \leq n$ there exists $\ptp[C]$ such that
	 $\widehat{\maxP}(\ptp[B_h]) = \Set{\ain[C][B_h][][{\msg[m_h]}][]}$
  \end{enumerate}
  then $\tj \CxtPi \aG \minP \maxP$ implies {\em $\aG \refines  \refgint[@][\vec{\msg}][\vec{B}]$}.
\end{lemma}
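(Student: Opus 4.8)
The plan is to verify the three requirements of \cref{def:refinementRel} for $\eset \eqdef \Sem{\aG}$, deriving all of them from the typing hypothesis via \cref{thr:soundness}. First I would observe that the system of \cref{fig:typing} derives judgements only for ground g-choreographies (there is as yet no rule for refinable actions), so any $\aG$ with $\tj\CxtPi\aG\minP\maxP$ is ground and \cref{thr:soundness} applies: $\eset \neq \bot$, $\Pi = \PartSet(\aG)$, and $\widehat{\minP}(\p') = \min(\Proj\eset{\p'})$, $\widehat{\maxP}(\p') = \max(\Proj\eset{\p'})$ for every $\p' \in \Pi$. In particular $\Sem{\aG}=\eset\neq\bot$, which is the first requirement of \cref{def:refinementRel}, and by \cref{cor:soundness} $\aG$ is well-formed, so that \cref{lem:min-max} is available.

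For the initiator requirement I would establish the identity $\min(\eset) = \minP \cap \lset^!$, whence $\esbj[{\min(\eset)}] = \esbj[{(\minP\cap\lset^!)}] = \Set{\p}$ by the second hypothesis. The inclusion $\min(\eset)\subseteq \minP\cap\lset^!$ holds because a globally minimal event lies in some maximal configuration $x$, where by \cref{lem:min-max} it is output-labelled, and being globally minimal it is a fortiori minimal in the projection of its own subject, hence in $\widehat{\minP}(\esbj[\cdot])\subseteq\minP$. Conversely, an output that is minimal in the projection of its subject has no predecessor at all: the only ordering across distinct subjects is the output-before-input of a single interaction, in which the output is the source, while any same-subject predecessor is excluded by projection-minimality; so such an event is globally minimal, giving $\minP\cap\lset^!\subseteq\min(\eset)$.

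The third requirement is the delicate one. Fixing $x\in\MaxC\eset$ and $1\le h\le n$, I would first note $\q_h=\ptp[B_h]\in\CxtPi=\Pi$: its label set $\widehat{\maxP}(\ptp[B_h])$ is non-empty by the third hypothesis and $\esbj[\maxP]=\CxtPi$ by the first, so by the second item of \cref{lem:min-max} we get $\q_h\in\esbj[x]$ and thus $\Proj x{\q_h}\neq\emptyset$; \cref{lem:partyping} then makes $\max(\Proj x{\q_h})$ a singleton $\Set{\ae}$. By \cref{thr:soundness} and the third hypothesis, $\max(\Proj\eset{\q_h})=\widehat{\maxP}(\q_h)=\Set{\ain[C][B_h][][{\msg[m_h]}][]}$, so it remains to identify the per-configuration maximum $\ae$ with this unique global projection maximum. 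This is the main obstacle: I must show that the maximum of $\q_h$'s events inside a single maximal configuration is already maximal among all $\q_h$-events of $\eset$. A priori a choice could let $\q_h$ stop early in the branch realised by $x$ while continuing in a conflicting branch, which would leave $\ae$ below a strictly larger $\q_h$-event outside $x$; I would rule this out using well-branchedness (\cref{def:wb}), since such a configuration would make $\q_h$ active in one branch and silent in the conflicting branch selected by $x$, violating the selector/passive clauses — equivalently contradicting that $\esbj[x]=\Pi$ holds uniformly across all branches (\cref{lem:min-max}). Hence $\ae$ is global-maximal, its label must be $\ain[C][B_h][][{\msg[m_h]}][]$, and this label lies in $\max(\Proj x{\q_h})$, establishing the third requirement and therefore $\aG\refines\refgint[@][\vec{\msg}][\vec{B}]$.
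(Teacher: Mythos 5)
Your proposal follows the same route as the paper's own proof: \cref{thr:soundness} yields condition \ref{def:refinementRel-i} of \cref{def:refinementRel} and the identification of $\widehat{\minP},\widehat{\maxP}$ with projection minima/maxima, \cref{lem:min-max} handles the initiator condition, and \cref{lem:min-max}.\ref{lem:min-max-ii} together with \cref{lem:partyping} handle condition \ref{def:refinementRel-iii}. Two minor remarks. First, to invoke \cref{lem:min-max} you must check $\Sem{\aG}\neq\emptyev$, not just $\Sem{\aG}\neq\bot$; the paper derives this from non-emptiness of $\CxtPi,\minP,\maxP$ (your second hypothesis gives $\minP\cap\lset^!\neq\emptyset$, so this is easy, but it should be said). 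Second, your two-sided identity $\min(\eset)=\minP\cap\lset^!$ is more than needed: the inclusion $\min(\eset)\subseteq\minP\cap\lset^!$ plus non-emptiness of $\min(\eset)$ already gives $\esbj[{\min(\eset)}]=\Set{\p}$, which is how the paper argues; moreover the converse inclusion rests on the claim that the only cross-subject ordering is the output-before-input pair of a single interaction, which is true only of immediate causality (the full order relates distinct subjects transitively), so it is fortunate that this half is dispensable.

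The interesting divergence is the step you call the main obstacle. You are right that it is the crux, and in fact you are more careful than the source: the paper does not prove it either, it simply asserts ``the fact that $\max(\Proj{x}{\q_h})\subseteq\maxP$''. However, your proposed justification is misdirected. In an event structure causality cannot cross conflict: the choice semantics $\Sem{\aG_1}+\Sem{\aG_2}$ puts the two branches in conflict and adds no order between them, and conflict inheritance plus irreflexivity exclude having both $\ae\conflict\ae'$ and $\ae<\ae'$. So the scenario you worry about --- $\ae$ lying below a $\q_h$-event of a branch conflicting with $x$ --- is impossible for trivial reasons, and the selector/passive clauses of \cref{def:wb}, which constrain only the \emph{minimal} events of the two branches, are not the relevant tool. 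The case that genuinely needs an argument is sequential composition: in $\rseq[\Sem{\aG_1}][\Sem{\aG_2}]$ a fresh copy of $\Sem{\aG_2}$ is glued above \emph{every} maximal configuration of $\Sem{\aG_1}$, so if $\ae$ lies in the $\Sem{\aG_1}$-part of $x$ it can sit below same-subject events of a copy attached to a \emph{different} maximal configuration of $\Sem{\aG_1}$, and those events are outside $x$. Here your instinct --- invoke the uniformity $\esbj[x]=\PartSet(\aG)$ of \cref{lem:min-max} --- is the right one, but it must be applied to the subterm $\aG_2$ inside a structural induction, not to $\aG$ at top level: if some maximal configuration of $\Sem{\aG_2}$ contains a $\q_h$-event then all of them do, so the copy inside $x$ would also contain one, contradicting maximality of $\ae$ in $\Proj{x}{\q_h}$. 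With that induction (choice and parallel are immediate, since they add no cross order), the identification of the per-configuration maximum with the global projection maximum goes through, and the rest of your argument coincides with the paper's.
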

\begin{proof}
  By Theorem \ref{thr:soundness}, we know that  $\tj \CxtPi \aG \minP \maxP$ implies $\Sem{\aG} \neq \bot$; on the other hand the
  hypotheses imply that none among $\CxtPi, \minP$ and $\maxP$ is empty, hence $\Sem{\aG} \neq \emptyev$. Recall that, by the same theorem, 
  $\widehat{\minP}(\ptp[c]) = \min (\Proj{\Sem{\aG}}{\ptp[c]})$ and $\widehat{\maxP}(\ptp[c]) = \max (\Proj{\Sem{\aG}}{\ptp[c]})$
  for all $\ptp[c] \in \CxtPi  = \PartSet(\aG)$.

  Let $x \in \MaxC{\Sem{\aG}}$. We have $\emptyset \neq \min(x) \subseteq \minP \cap \lset^!$ by the above and by 
  Lemma \ref{lem:min-max}.\ref{lem:min-max-i}, hence $\esbj[{(\minP \cap  \lset^! )}] = \Set{\p}$ implies
  $\esbj[{\min(x)}] = \Set{\p}$, and therefore $\p$ is the initiator of $\aG$.

  On the other hand, by the hypothesis that for all $\ptp[B_h] \in  \vec{\ptp[B]}$ there exists $\ptp[c]$ s.t.
  $\widehat{\maxP}(\ptp[B_h]) = \Set{\ain[C][B_h][][{\msg[m_h]}][]}$ an Theorem \ref{thr:soundness} we infer that
  $\Set{\ain[C][B_h][][{\msg[m_h]}][]} = \widehat{\maxP}(\ptp[B_h] ) = \max (\Proj{\Sem{\aG}}{\ptp[B_h] })$ for all
  $\ptp[B_h] \in  \vec{\ptp[B]}$. Therefore there exists $y \in \MaxC{\Sem{\aG}}$ such that $\ain[C][B_h][][{\msg[m_h]}][] \in y \cap \maxP$;
  by Lemma \ref{lem:min-max}.\ref{lem:min-max-ii}, $\esbj[x] = \esbj[y]$, hence $\ptp[B_h] \in \esbj[x]$ and so
  $\widehat{x}({\ptp[B_h] }) \neq \emptyset$.
  This and the the typability of $\aG$ imply that $\widehat{x}({\ptp[B_h] })$ is a singleton by Lemma \ref{lem:partyping}:
  from this and the fact that  $\max(\Proj{x}{\ptp[B_h] }) \subseteq \maxP$ it follows 
  that $\max(\Proj{x}{\ptp[B_h] }) = \widehat{\maxP}(\ptp[B_h] ) = \Set{\ain[C][B_h][][{\msg[m_h]}][]}$.
\end{proof}

In view of Lemma \ref{lem:t-refConditions} it is sound to extend the type system to refinable choreographies by adding to the rules
in Figure \ref{fig:typing} the following axiom schema:
\[
  \arule{
	 \esbj[\minP]=\esbj[\maxP]= \CxtPi \quad
	 \esbj[{(\minP \cap  \lset^! )}] = \Set{\p} \quad 
	 \forall h\, \exists \ptp[C]\in\CxtPi. \;\widehat{\maxP}(\ptp[B_h]) = \Set{\ain[C][B_h][][{\msg[m_h]}][]}
  }{
	 \tj  \CxtPi {\refgint[@][{\msg_1} \ldots {\msg_n}][{\q_1}, \ldots, {\q_n}] } \minP \maxP
  }{
	 t-ref
  }
\]

\begin{remark}\label{rem:refinable}\it
  Given any refinable action
  $\refgint[@][\vec{\msg}][\vec{B}] \equiv \refgint[@][{\msg_1} \ldots
  {\msg_n}][{\q_1}, \ldots, {\q_n}]$ there is a typing context
  $\CxtPi_0, \minP_0, \maxP_0$ such that
  $\tj {\CxtPi_0} { \refgint[@][\vec{\msg}][\vec{B}] } {\minP_0}
  {\maxP_0}$ is an instance of rule \textsc{t-ref}. Indeed taking
  $\CxtPi_0 = \Set{\p, \ptp[B_1], \ldots , \ptp[B_n]}$ and
  $ \minP_0 = \maxP_0 = \Set{\aout[a][b_1][][\msg_1],
	 \ain[a][b_1][][\msg_1], \ldots, \aout[a][b_n][][\msg_n],
	 \ain[a][b_n][][\msg_n]} $ it is easy to check that the conditions
  of rule \textsc{t-ref} are satisfied. On the contrary the same
  conditions do not ensure that there is a $\aG$ that is typable in
  the given context: let
  $\CxtPi' = \Set{\ptp[a], \ptp[b], \ptp[c], \ptp[d]}$ and
  $\minP' = \maxP' = \Set{\aout[a][b][][m], \ain[a][b][][m],
	 \ain[c][d][][m]} $, then
  $\tj {\CxtPi'} {\refgint[a][m][b]}{\minP'}{\maxP'}$, but there
  exists no ground $\aG$ such that
  $\tj {\CxtPi'} {\aG}{\minP'}{\maxP'}$.
\end{remark}

\newcommand{\TRef}{ \textsc{\footnotesize t-ref} }

\begin{example}\label{exm:refinements}\rm
  The refinements in the above proof are the simplest, but less
  interesting ones; to see more significant examples let us first
  generalize Corollary \ref{cor:compositionality} to the case of
  contexts $\aG[\cdot]_1 \ldots [\cdot]_n$, with $n$ distinct holes.
  Suppose that
  $\tj {\CxtPi_i} { \refgint[\p_i][\vec{\msg_i}][\vec{B}_i] }
  {\minP_i} {\maxP_i}$ are the instances of \textsc{t-ref} that have
  been used in deriving:
  \[
	 \tj \CxtPi {\aG[\refgint[\p_1][\vec{\msg_1}][\vec{B}_1] ]_1 \ldots [\refgint[\p_n][\vec{\msg_n}][\vec{B}_n] ]_n}{\minP}{\maxP}
  \]
  If $\tj  {\CxtPi_i} { \aG_i } {\minP_i} {\maxP_i}$ are derivable for ground $\aG_i$ then
  $
  \tj \CxtPi {\aG[\aG_1 ]_1 \ldots [\aG_n ]_n}{\minP}{\maxP}
  $
  is derivable, and hence $\aG[\aG_1 ]_1 \ldots [\aG_n ]_n$ is well-formed by Theorem \ref{thr:soundness}.

  Resuming from the Introduction and adapting from Example \ref{exm:typingChoice} we have:
  {\small
	 \begin{equation}\label{eq:derivationOfRefinableChor}
		\prooftree
		\prooftree
		\justifies
		\CxtPi \der {\refgint[c][md][s]}: \typetp{\minP_1}{ \maxP_1}
		\using \TRef
		\endprooftree
		\qquad
		\prooftree
		\prooftree
		\justifies
		\CxtPi \der \refgint[c][req][s] : \typetp{\minP_2}{ \maxP_2}
		\using \TRef
		\endprooftree
		\quad
		\prooftree
		\justifies
		\CxtPi \der \refgint[s][done][c]  : \typetp{\minP_3}{ \maxP_3}
		\using \TRef
		\endprooftree
		\justifies
		\CxtPi \der \gseq[][{\refgint[c][req][s]}][{\refgint[s][done][c]}]  : \typetp{\minP_2}{ \maxP_3}
		\endprooftree
		\justifies
		\CxtPi \der \gcho[][{\refgint[c][md][s]}][{\gseq[][{\refgint[c][req][s]}][{\refgint[s][done][c]}]}]
		: \typetp{\minP_1 \cup \minP_2}{\maxP_1 \cup \maxP_3}
		\using \textsc{\footnotesize t-ch}
		\endprooftree
	 \end{equation}
  }
  where $\CxtPi = \Set{\ptp[C],\ptp[S]}$,
  $\minP_1 = \maxP_1 = \{\aout[c][s][][\mdmsg],
  \ain[c][s][][\mdmsg]\}$,
  $\minP_2 = \maxP_2 = \{\aout[c][s][][\reqmsg],
  \ain[c][s][][\reqmsg]\}$ and $\minP_3 = \maxP_3 =$ \\
  $\{\aout[s][c][][\donemsg], \ain[s][c][][\donemsg]\}$.  From Example
  \ref{exm:typingChoice} we know that we can derive
  $\CxtPi \der \gint[][c][md][s] : \typetp{\minP_1}{ \maxP_1}$,
  $\CxtPi \der \gint[][c][req][s] : \typetp{\minP_2}{ \maxP_2}$ and
  $\CxtPi : \typetp{\minP_3}{ \maxP_3}$; hence we conclude that
  $\CxtPi \der
  \gcho[][{\gint[][c][md][s]}][{\gseq[][{\gint[][c][req][s]}][{\gint[][s][done][c]}]}]
  : \typetp{\minP_1 \cup \minP_2}{\maxP_1 \cup \maxP_3}$ is derivable,
  and so semantically well-formed.
  
  Consider now the g-choreography
  $\gseq[][{\gint[][c][md][b]}][{\gint[][b][md][s]}]$, which can be
  checked to refine $\refgint[c][md][s]$: refining the latter with
  the former in the context
  $\gcho[][{\refgint[c][md][s]}][{\gseq[][{\gint[][c][req][s]}][{\gint[][s][done][c]}]}]$
  spoils well-formed, and indeed
  $\CxtPi \not\der
  {\gseq[][{\gint[][c][md][b]}][{\gint[][b][md][s]}]}:
  \typetp{\minP_1}{ \maxP_1}$.  However it suffices to take
  $\CxtPi' = \CxtPi \cup \Set{\ptp[B]} = \Set{\ptp[C],\ptp[S],
	 \ptp[B]}$ and $\minP'_1 = \minP_1 \cup \Set{\ain[c][b][][md]}$,
  $\maxP'_1 = \maxP_1 \cup \Set{\aout[b][s][][md]}$ to have that
  $\CxtPi' \der {\gseq[][{\gint[][c][md][b]}][{\gint[][b][md][s]}]}:
  \typetp{\minP'_1}{ \maxP'_1}$ is derivable and
  $\CxtPi' \der {\refgint[c][md][s]}: \typetp{\minP'_1}{ \maxP'_1}$
  is an instance of \textsc{t-ref}. Incidentally we note that
  $\CxtPi', \minP'_1, \maxP'_1$ are computable from
  $\gseq[][{\gint[][c][md][b]}][{\gint[][b][md][s]}]$.

  Still we cannot freely replace $\CxtPi' \der {\refgint[c][md][s]}: \typetp{\minP'_1}{ \maxP'_1}$ in the derivation
  (\ref{eq:derivationOfRefinableChor}), because $\CxtPi' \neq \CxtPi$ where the difference
  is $\ptp[b]$, making rule \textsc{t-ch} inapplicable.
  However, by computing the typing context of $\gseq[][{\gint[][c][x][b]}][{\gint[][b][req][s]}]$
  we obtain $\CxtPi' \der \gseq[][{\gint[][c][x][b]}][{\gint[][b][req][s]}] : \typetp{\minP'_2}{\maxP'_2}$ where
  $\minP'_2 = \Set{\aout[c][b][][x], \ain[c][b][][x], \aout[b][s][][rep]}$ and
  $\maxP'_2 = \Set{\ain[c][b][][x], \aout[b][s][][rep], \ain[b][s][][rep]}$ we can
  easily check that  $\CxtPi' \der \refgint[c][req][s] : \typetp{\minP'_2}{\maxP'_2}$ is an instance of
  \textsc{t-ref}, so that $\gseq[][{\gint[][c][x][b]}][{\gint[][b][req][s]}]  \refines \refgint[c][req][s]$ by 
  Lemma \ref{lem:t-refConditions}. In this case we do not need to further modify the derivation (\ref{eq:derivationOfRefinableChor})
  to deduce $\CxtPi \der \gseq[][{\refgint[c][req][s]}][{\refgint[s][done][c]}]  : \typetp{\minP'_2}{ \maxP'_3}$ with
  $\maxP'_3 = \maxP_3 \cup \Set{\aout[b][s][][rep]}$ and then, applying rule \textsc{t-ch} we eventually obtain a derivation of
  $\CxtPi' \der \gcho[][{\refgint[c][md][s]}][{\gseq[][{\refgint[c][req][s]}][{\refgint[s][done][c]}]}]
  : \typetp{\minP'_1 \cup \minP'_2}{\maxP'_1 \cup \maxP'_3}$,
  that is the typing context in which we can safely replace $\gseq[][{\gint[][c][md][b]}][{\gint[][b][md][s]}]$ to
  $\refgint[c][md][s]$, $\gseq[][{\gint[][c][x][b]}][{\gint[][b][req][s]}]$ to $\refgint[c][req][s]$ and $\gint[][s][done][c]$ to
  $\refgint[s][done][c]$.
  \finex
\end{example}

%%% Local Variables:
%%% mode: latex
%%% TeX-master: "main"
%%% End:

\section{Conclusions \& Related Work}\label{sec:conc}
We proposed a framework for refining the global views of choreographies.
In the context of concurrent and distributed systems, refinement
methods have received great attention in the 80-90's.
Action refinement has been studied in different settings by 
adding  refinement combinators to, e.g.,  process algebras~\cite{aceto1994adding}, 
labelled event structures~\cite{van1989equivalence} and 
causal trees~\cite{darondeau1993refinement}.
The cornerstone in this line of work is that actions, considered
atomic at a given level of abstraction, are refined into processes
or computations, which are non-atomic at a lower level of abstraction.
For instance, a labelled event structure can be refined into another one 
by substituting all events that have a particular label by an
event structure~\cite{van1989equivalence}. Analogously, a term of
 a process algebra can be refined into another one by replacing all occurrences of 
a particular action by a term.
A straightforward application of this approach to global choreographies would 
suggest to consider a standard language for choreographies, e.g., the one formalised 
in~\cite{gt18} and
reproduced  in \cref{def:refgg}, and then provide a substitution mechanism for 
its atomic actions, which in this setting would be interactions $\gint$.
Despite being technically possible, we opted for a generalised version
 $\refgint[@][{\msg_1} \ldots {\msg_n}][{\q_1}, \ldots, {\q_n}]$ that allows 
for the explicit definition of several participants involved in a complex, underspecified 
interaction. In this way, we provide more flexibility for abstraction. 
Suppose we would like to state that a participant $\p$ is intended to communicate 
two messages $\msg_1$ and $\msg_2$ respectively to $\q_1$ and $\q_2$ but $\p$ is 
uninterested on the way in which those messages are delivered, e.g., $\p$ is equally satisfied if 
$\msg_1$ arrives first to $\q_1$ and then $\q_1$ sends $\msg_2$ to $\q_2$, or 
if the flow involves first $\q_2$ and then $\q_1$, or else if a (non-specified) 
broker sends the messages to both 
receivers. A  specification of all behaviours described above in terms of 
binary interactions leaves little space for abstraction.  This has been the main 
motivation for the introduction of refinable interactions. 
We may have opted for more general versions of refinable interactions, e.g., 
by considering multiple senders on the left hand side of the arrow. We
opted for the current  presentation because the standard well-formedness condition on global choreographies
introduces several limitations on the way in which such abstractions could be implemented, e.g., 
they could not be refined as a choices because of the standard constraint 
about single selectors. We plan to investigate suitable generalisations on the 
shape of refinable interactions. 

  A semantics of g-choreographies in terms of
  pomsets~\cite{pratt1986modeling} has been introduced
  in~\cite{gt16,gt18}.
  Pomsets can be envisaged as event structures with an empty
  conflict relation.
  This semantics captures a more general notion of well-brancheness
  than the one considered here.
  In principle, one could borrow this more general notion of
  well-formedness in our framework at the cost of increasing the
  technical execution.
Despite we associate global choreographies with event structures (as
previously done, e.g., in~\cite{castellani2019event} to give semantics
to multi-party session types~\cite{honda16jacm}), we remark that
refinement techniques developed for event structures cannot be
straightforwardly lifted to the language of global choreographies
because the semantics of interactions is given in terms of two events.
Hence, the refinement of an interaction would translate into the refinement of 
a sub-structure instead of a single event. 
This  establishes an interesting connection 
with previous work that we plan to investigate further.  
Along the same lines, the main focus on  previous work on 
refinement~\cite{aceto1994adding,van1989equivalence,darondeau1993refinement}
is concerned with  the consistency of refinement with respect to the 
 semantics of
the language, i.e., whether refinement preserves behavioural equivalences. 
Note that we have left implicit the semantics of refinable interactions, which 
is given in terms of the set of  concrete realisations that it admits. An interesting line of 
work that we envisage for future development is whether the proposed refinement preserves
equivalences.

The typing of ground g-choreography is unique
(cf. Corollary~\ref{cor:soundness}).
Actually, a simple inspection of the typing rules in \cref{fig:typing}
shows that type inference is trivial for ground g-choreography.
This is not the case for non-ground g-choreographies where a type
inference algorithm has to \quo{guess} the types of refinable actions.
We leave type inference open and we will address it in the future.

We also leave open the problem of which properties are maintained
through the refinement process, namely what can be established of
abstract programs that still holds of concrete ones. In particular, we
would like to preserve well-formedness of g-choreographies along
refinements.

% !TEX root =  main.tex

%%% Local Variables:
%%% mode: latex
%%% TeX-master: "main"
%%% End:

\bibliographystyle{eptcs}
\bibliography{bib}

\end{document}

%%% Local Variables: 
%%% mode: latex
%%% TeX-master: t
%%% End: 